\DeclareMathOperator*{\argmin}{argmin}
\newtheorem{lemma}{Lemma}
\newtheorem{theorem}{Theorem}
\newtheorem{remark}{Remark}
\newcommand {\A} {\alpha}
\newcommand {\E} {E}
\newcommand {\T} {\theta}
\newcommand{\HT} {\hat{\theta}_{\A,n}}
\newcommand {\ep} {\epsilon}
\newcommand {\pa} {\partial}
\newcommand {\paa}{\partial^2_\T}
\title{ A robust approach for testing parameter change\\ in Poisson autoregressive models}
\author[1]{Jiwon Kang}
\affil{Department of Computer Science and Statistics, Jeju National University}
\author[2]{Junmo Song\thanks{Corresponding author, e-mail: \href{mailto:jsong@knu.ac.kr}{jsong@knu.ac.kr}}}
\affil{Department of Statistics, Kyungpook National University}
\date{}
\begin{document}
\maketitle

\begin{abstract}
Parameter change test has been an important issue in time series analysis.  The problem has also been actively explored in the field of
integer-valued time series, but the testing in the presence of outliers has not yet been extensively investigated.
This study considers the problem of testing for parameter change in Poisson autoregressive
models particularly when observations are contaminated by outliers.
To lessen the impact of outliers on testing procedure, we propose a test based on the
density power divergence, which is introduced by Basu et al. (Biometrika, 1998),
and derive its limiting null distribution. Monte Carlo simulation results demonstrate validity and
strong robustness of the proposed test.\\
\end{abstract}
\noindent{\bf Key words and phrases}: testing for parameter change, Poisson AR model, outliers, robust test, density power divergence.

%#############################################################################################################
%##################################  Introduction ###########################################################
%#############################################################################################################

\section{Introduction}
Recently, there has been a growing interest in time series of counts because such data are frequently encountered in various application fields, for instance, epidemiology (Zeger (1988) and Jung and Tremayne (2011)), finance (Diop and Kengne (2017)), insurance industry (Zhu and Joe (2006)), statistical quality control (Weiß (2009)) and so on. Accordingly, integer-valued models for count time series have been developed by several authors. Among them, integer-valued generalized autoregressive conditional heteroskedastic (INGARCH) model introduced by Ferland {\it et al.} (2006) has been popularly used in practical applications, mainly because it can successfully capture the overdispersion phenomenon that occurs frequently in count time series. For more details, see Weiß (2010). INGARCH process follows the Poisson distribution conditionally on the past with the mean process, say $\{\lambda_t\}$, which is a linear function of its past values and past observations. Fokianos {\it et al.} (2009) generalized it to Poisson autoregressive (AR) models by allowing for nonlinearity of $\lambda_t$.

In this study, we consider the problem of testing for parameter change in Poisson AR models, particularly when outliers are involved in data.
Change point problem has attracted considerable attention from researchers and practitioners   and vast amount of literature
have been devoted to this area. For historical background and general review, we refer the readers to Aue and Horv\'{a}th (2013) and
Horv\'{a}th and Rice (2014). The problem has also been actively explored in the field of integer-valued time series by many authors. For example, Kang and Lee (2009, 2014a) proposed CUSUM tests for parameter change in RCINAR models and Poisson AR models, respectively, and  Doukhan and Kengne (2015) developed test procedures in a general class of Poisson AR models. Kang and Song (2017) introduced score test in Poisson AR models and Hudecov\'{a} {et al.} (2017) proposed probability generating function based methods for INAR models and Poisson AR models. However, to the best of our knowledge, testing for parameter change in the presence of outliers has not been dealt with except for Kang and Song (2015).

The study by Kang and Song (2015) addressed that sizes of estimates-based CUSUM test in Kang and Lee (2014a) for Poisson AR models are severely distorted by outliers and proposed a robust test using minimum density power divergence estimator (MDPDE). This distortions in sizes are also likely to be observed in other testing situations because the existing tests for time series models of counts are usually constructed based on (quasi-)likelihood estimator and such likelihood based methods are sensitive to outliers. Hence, various robust methods for integer-valued time series models also need to be developed for reliable inference in the presence of outliers.
% As widely addressed in many statistical literature, it should be noted that outliers can influence inference procedure in time series models of count, leading to unreliable results. Indeed, Also,

In this paper, we introduce a density power (DP) divergence version of the score test in Kang and Song (2017). Indeed, the score test for parameter change firstly introduced by  Horv\'{a}th and Parzen (1994) can be considered as a test induced from Kullback-Leibler divergence. Recently,  Song and Kang (2019) investigated the extension of the score test  to DP divergence version and addressed that the proposed test in their study enjoys the merits of score test and DP divergence based inferences. That is, like the score test, the DP version of score test produces stable sizes when true parameter lies near the boundary of parameter space and, at the same time, it has robust and efficient properties as do other inferences based on DP divergence, see, for example, Basu et al. (1998) and Basu et al. (2016). It is importantly noted that the robust test by Kang and Song (2015) suffers from size distortions in such boundary situations like other estimates-based test such as Kang and Lee (2014a). Also, from a computational point of view, the test uses many partial MDPD estimates. Here, it should also be noted that optimization of the objective function of MDPDE is much more computationally burdensome. The proposed test in the present study can remedy such defects of MDPD estimates-based CUSUM test, while maintaining the robust property.

There have been notable studies that deal with robust tests using some divergences. Lee and Na (2005) introduced MDPD estimates-based CUSUM test and Batsidis et al. (2013) considered  a change-point detection using $\phi$-divergence. Recently, Basu {\it et al.} (2016) proposed generalized Wald-type tests based on DP divergence and Knoblauch et al. (2018) proposed a Bayesian online change point detection algorithm using $\beta$-divergence. For statistical inference based on various divergences, we refer the reader to Pardo (2006).

This paper is organized as follows. In Section 2, we review the MDPDE for Poisson AR models and its asymptotic properties.
In Section 3, we propose a robust test based on  DP divergence and derive its asymptotic null distribution. In Section 4, we perform a simulation study to compare with the score test. Section 5 concludes the paper. All the proofs for the results in Section 3 are provided in the Appendix.

%#############################################################################################################
%############################### Section2 ####################################################################
%#############################################################################################################

\section{MDPDE for Poisson AR models}

In this section, we briefly review the MDPDE for Poisson AR models.
The Poisson AR model is defined by
\begin{eqnarray}\label{ACP}
X_t|\mathcal{F}_{t-1}\sim {Poisson}(\lambda_t),~~\lambda_t = f_\theta (\lambda_{t-1},X_{t-1})~~{\rm for\ all}\ t\in \mathbb{Z},
\end{eqnarray}
where $f_\theta$ is a known positive function on $[0,\infty)\times \mathbb{N}_0, \mathbb{N}_0=\mathbb{N}\cup\{0\}$, depending on
unknown parameter $\theta\in \Theta\subset \mathbb{R}^d$, and $\mathcal{F}_{t-1}$ is the $\sigma$-field generated by
$\{X_{t-1},X_{t-2},\ldots\}$.

Denote the unknown true parameter by $\theta_0$. To estimate $\theta_0$ in the presence of outliers, Kang and Lee (2014b) introduced the MDPDE for Poisson AR models as a robust estimator. The estimator is shown to have strong robustness with little loss in asymptotic efficiency. Suppose that $X_1,\cdots,X_n$ are observed from $(\ref{ACP})$. Then, the MDPDE for (\ref{ACP}) is given by
\begin{eqnarray*}\label{MDPDE1}
\hat\theta_{\alpha,n}=\argmin_{\theta\in\Theta}\tilde
H_{\alpha,n}(\theta)=\argmin_{\theta\in\Theta}\sum_{t=1}^{n}\tilde
l_{\alpha,t}(\theta),
\end{eqnarray*}
where
\begin{eqnarray*}
\tilde l_{\alpha,t}(\theta) := \left\{ \begin{array}{ll}
   \displaystyle  \sum_{y=0}^{\infty}\left(\frac{e^{-\tilde\lambda_t}\tilde\lambda_t^y}{y!}\right)^{1+\alpha}
-\left(1+\frac{1}{\alpha}\right)\left(\frac{e^{-\tilde\lambda_t}\tilde\lambda_t^{X_t}}{X_t!}\right)^{\alpha} & \mbox{, $\alpha > 0$\,,}\vspace{0.15cm}\\
   \displaystyle  \tilde\lambda_t - X_t {\rm log}\tilde\lambda_t + {\rm log}(X_t!)      & \mbox{, $\alpha = 0$,}
   \end{array}
 \right.
\end{eqnarray*}
and $\tilde\lambda_t$ are defined recursively by
\begin{eqnarray*}
\tilde \lambda_t =f_\theta (\tilde\lambda_{t-1},X_{t-1}),~~t \geq 2,
\end{eqnarray*}
with  arbitrarily chosen $\tilde\lambda_1$. Note that the MDPDE with $\A$=0 is exactly the same as the MLE.

In what follows, we denote by $l_{\alpha,t}(\theta)$ the counterpart of $\tilde l_{\alpha,t}(\theta)$
substituting $\tilde\lambda_t$ with $\lambda_t$. We use the notations $\tilde\lambda_t(\theta)$ and $\lambda_t(\theta)$ to
represent $\tilde\lambda_t$ and $\lambda_t$, respectively. Further, $\partial_\theta$ and $\partial_\theta^2$ are used to denote $\partial/\partial\theta$ and $\partial^2/\partial\theta\partial{\theta}^T$, respectively. The symbol $||\cdot||$ denotes the $l_2$ norm for matrices and vectors, and
$E(\cdot)$ is taken under $\theta_0$.

To derive asymptotic results for the MDPDE, the following assumptions are required :
\begin{enumerate}
\item[\bf A1.]  For all $\theta\in\Theta$,
$|f_\theta (\lambda,x)-f_\theta (\lambda',x')|
\leq \kappa_1 |\lambda-\lambda'|+ \kappa_2 |x-x'|~~{\rm~for~ all~} \lambda,\lambda'\geq 0 {\rm~and~} x,x'\in \mathbb{N}_0$, \\
where $\kappa_1$ and $\kappa_2$ are nonnegative real numbers with
$\kappa:=\kappa_1+\kappa_2<1$.
\item[\bf A2.] $\theta_0\in \Theta$ and $\Theta$ is compact. Also, for some $\delta_L>0$, the function $f$ satisfies
\begin{eqnarray*}
f_\theta(\lambda,x)\geq \delta_L
\end{eqnarray*}
for all $\theta\in \Theta$, $\lambda\geq 0$ and $x \in \mathbb{N}_0$.
\item[\bf A3.] $E(\sup_{\theta
\in\Theta}\lambda_1(\theta))<\infty$ and
     $E(\sup_{\theta \in\Theta}\tilde\lambda_1(\theta))<\infty$.
\item[\bf A4.]$\lambda_t(\theta)=\lambda_t(\theta_0)~a.s.$ implies $\theta=\theta_0$.
\item[\bf A5.] $\theta_0$ is an interior point of $\Theta$.
\item[\bf A6.] $\lambda_t(\theta)$ is twice continuously
differentiable with respect to $\theta$ and satisfies
     $$E\left(\sup_{\theta \in\Theta}\left|\left|\partial_\theta
     \lambda_t(\theta)\right|\right|\right)^4<\infty~~{\rm
     and}~~
     E\left(\sup_{\theta \in\Theta}\left|\left|\partial_\theta^2 \lambda_t(\theta)\right|\right|\right)^2<\infty.$$
\item[\bf A7.] There exists an integrable random variable $V$ and a real number $\rho$ with $0<\rho<1$, such that, a.s.,\\
     $$\sup_{\theta\in\Theta}\left|\left|\partial_\theta \lambda_t(\theta)-\partial_\theta \tilde\lambda_t(\theta)
     \right|\right|\leq V\rho^{t}~~{\rm and}~~
     \sup_{\theta \in\Theta}\left|\left|\partial_\theta^2 \lambda_t(\theta)-\partial_\theta^2 \tilde\lambda_t(\theta)\right|\right|\leq V\rho^{t}.$$
\item[\bf A8.] $\nu^T\partial_\theta \lambda_t(\theta_0)=0 ~a.s.$ implies  $\nu= 0$.
\end{enumerate}

Under {\bf A1}, there is a strictly stationary and ergodic solution for $(\ref{ACP})$ and
any order moments of $X_t$ and $\lambda_t$ are finite (cf. Neumann (2011) and Doukhan {\it et al.} (2012)).
The following asymptotic result is established by Kang and Lee (2014b).

%The objective function above is the empirical version of the density power divergence for a Markov chain, and so naturally inherits the robustness property of the divergence.

\begin{theorem}\label{thm1} Under {\bf A1}-{\bf A4}, for each $\alpha \geq 0$, $\hat\theta_{\alpha,n}$ converges almost surely to $\theta_0$. If, in addition, {\bf A5}-{\bf A8} hold, then
\begin{eqnarray*}
\sqrt{n}(\hat\theta_{\alpha,n}-\theta_0)\stackrel{d}{\longrightarrow}N\left(0,J_{\alpha}^{-1}K_{\alpha}
J_{\alpha}^{-1}\right)~~as~ n\rightarrow\infty,
\end{eqnarray*}
where $K_\alpha:=(1+\alpha)^{-2}E\left(\partial_\theta l_{\alpha,t}(\theta_0) \partial_{\theta^T} l_{\alpha,t}(\theta_0)\right)$ and $J_\alpha:=-(1+\alpha)^{-1} E\left(\partial_\theta^2 l_{\alpha,t}(\theta_0)\right)$.
\end{theorem}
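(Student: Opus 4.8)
The plan is to treat $\HT$ as a standard M-estimator and follow the classical consistency-then-normality route, the only genuinely non-routine ingredients being the passage from the initialized quantities $\tilde\lambda_t$ to the stationary, ergodic versions $\lambda_t$ and the verification that the DP-divergence score is centered at $\theta_0$.

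First I would establish consistency. The initialization enters only through $\tilde\lambda_1$, and {\bf A1} gives the contraction $\sup_\theta|\tilde\lambda_t(\theta)-\lambda_t(\theta)|\le\kappa_1^{\,t-1}\sup_\theta|\tilde\lambda_1-\lambda_1|\to 0$ geometrically a.s.; together with {\bf A2} (the lower bound $\lambda_t\ge\delta_L$, which keeps the Poisson terms and the series $\sum_{y\ge0}(e^{-\lambda}\lambda^y/y!)^{1+\alpha}$ smooth) this lets me show $n^{-1}\sum_t(\tilde l_{\alpha,t}(\theta)-l_{\alpha,t}(\theta))\to0$ uniformly on the compact $\Theta$. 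Since the solution of $(\ref{ACP})$ is stationary and ergodic with all moments finite, the ergodic theorem combined with the integrable envelope {\bf A3} yields a uniform law of large numbers, $n^{-1}\sum_t l_{\alpha,t}(\theta)\to L_\alpha(\theta):=E\,l_{\alpha,t}(\theta)$ uniformly. Up to an additive constant, $L_\alpha(\theta)$ is the expected DP divergence between the conditional laws Poisson$(\lambda_t(\theta))$ and Poisson$(\lambda_t(\theta_0))$; this divergence is nonnegative and vanishes iff $\lambda_t(\theta)=\lambda_t(\theta_0)$ a.s., which by {\bf A4} forces $\theta=\theta_0$. Hence $\theta_0$ is the unique minimizer of $L_\alpha$, and uniform convergence to a function with a well-separated minimum gives $\HT\to\theta_0$ a.s.

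For asymptotic normality I would argue by the usual score expansion. By {\bf A5} and consistency, $\HT$ is eventually interior, so $\partial_\theta\tilde H_{\alpha,n}(\HT)=0$, and a mean-value expansion about $\theta_0$ gives
\begin{eqnarray*}
\sqrt{n}(\HT-\theta_0)=-\Big(\tfrac1n\,\partial_\theta^2\tilde H_{\alpha,n}(\theta_n^\ast)\Big)^{-1}\tfrac{1}{\sqrt n}\,\partial_\theta\tilde H_{\alpha,n}(\theta_0),
\end{eqnarray*}
with $\theta_n^\ast$ on the segment between $\HT$ and $\theta_0$. For the score, {\bf A7} lets me replace $\partial_\theta\tilde l_{\alpha,t}(\theta_0)$ by $\partial_\theta l_{\alpha,t}(\theta_0)$ with asymptotically negligible error. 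The key structural fact is that $E(\partial_\theta l_{\alpha,t}(\theta_0)\mid\mathcal{F}_{t-1})=0$: differentiating $L_\alpha$ under the integral sign and using that $X_t\mid\mathcal{F}_{t-1}$ is genuinely Poisson$(\lambda_t(\theta_0))$ makes $\{\partial_\theta l_{\alpha,t}(\theta_0),\mathcal{F}_t\}$ a stationary, ergodic, square-integrable martingale difference sequence, with the fourth-moment bound in {\bf A6} supplying the Lindeberg condition. The martingale CLT then gives $n^{-1/2}\sum_t\partial_\theta l_{\alpha,t}(\theta_0)\xrightarrow{d}N(0,(1+\alpha)^2K_\alpha)$. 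For the Hessian, a second uniform law of large numbers (again via {\bf A6}) together with the consistency of $\theta_n^\ast$ yields $n^{-1}\partial_\theta^2\tilde H_{\alpha,n}(\theta_n^\ast)\to E\,\partial_\theta^2 l_{\alpha,t}(\theta_0)=-(1+\alpha)J_\alpha$ in probability, where {\bf A8} guarantees that $J_\alpha$ is positive definite and hence invertible. Substituting and applying Slutsky, the $(1+\alpha)$ factors cancel to leave the sandwich covariance $J_\alpha^{-1}K_\alpha J_\alpha^{-1}$, as claimed.

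The main obstacle I anticipate is the uniform approximation and moment control for the $\alpha>0$ objective: unlike the MLE case ($\alpha=0$), $\tilde l_{\alpha,t}$ and its first two $\theta$-derivatives involve the infinite series $\sum_{y\ge0}(e^{-\lambda}\lambda^y/y!)^{1+\alpha}$ and the power $(e^{-\lambda}\lambda^{X_t}/X_t!)^{\alpha}$, so one must show that these and their $\lambda$-derivatives are dominated by integrable envelopes and are Lipschitz in $\lambda$ uniformly over $\Theta$, exploiting $\lambda\ge\delta_L$ and the finiteness of all moments of $X_t$. Propagating the geometric bounds of {\bf A1} and {\bf A7} through these nonlinear functions, so that both the $\tilde l\to l$ replacement and the martingale-difference centering survive, is where the real work lies; the remaining steps are standard once these envelopes are in place.
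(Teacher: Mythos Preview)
Your sketch is sound and follows the standard M-estimator template (uniform ergodic law for consistency, score expansion plus martingale CLT and Hessian LLN for normality, with the $\tilde\lambda_t\to\lambda_t$ replacement handled via the contraction in {\bf A1} and the geometric bounds in {\bf A7}). Note, however, that the paper does not actually prove Theorem~\ref{thm1}: it simply records the result and attributes it to Kang and Lee (2014b), so there is no in-paper proof to compare against. Your outline is precisely the route one expects that reference to take, and the technical obstacle you flag---controlling the envelopes and Lipschitz behaviour of the DP-divergence terms $\sum_{y\ge0}(e^{-\lambda}\lambda^y/y!)^{1+\alpha}$ and $(e^{-\lambda}\lambda^{X_t}/X_t!)^{\alpha}$ and their $\lambda$-derivatives uniformly over $\lambda\ge\delta_L$---is indeed where the work in that paper lies.
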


\begin{remark}
Model $(\ref{ACP})$ with a linear specification $f(\lambda, x)=w+a\lambda+b x$ is referred to as INGARCH(1,1) models. The INGARCH model satisfies {\bf A1} when $a+b<1$. This model is particularly attractive for overdispersed count data. For more details, see Ferland {\it et al.} (2006).
\end{remark}

%The time series of counts observed in practice often exhibit overdispersion. The INGARCH(p, q) models are able to describe integer-valued processes with overdispersion. Known properties of these models, however, are nearly exclusively restricted to the special case p = q = 1. In this article, we derive a set of equations from which the variance and the autocorrelation function of the general case can be obtained. We investigate the purely autoregressive INGARCH(p, 0) models and show that they are closely related to the standard AR(p) models. For p = 1, we determine the marginal distribution in terms of its cumulants. A real-data example highlights potential fields of application of the INGARCH(p, 0) models.

%#############################################################################################################
%############################### Section3 ####################################################################
%#############################################################################################################

\section{DP divergence based test for parameter change in Poisson AR models}
In this section, we consider the problem of testing the following hypotheses in the presence of outliers:
\begin{eqnarray*}
&& H_0: \theta\textrm{ does not change over~} X_1,\ldots,X_n~ vs.\\
&& H_1:\textrm{not~} H_0.
\end{eqnarray*}
For this, we construct a test statistics using the estimating function of the MDPDE, i.e., $\pa_{\theta}\tilde{H}_{\A,n}(\T)$.
By applying Taylor's theorem to $\pa_{\theta}\tilde{H}_{\A,n}(\T)$, we have that  for each $ s \in [0,1]$,
\begin{eqnarray}\label{Taylor1}
\frac{1}{\sqrt{n}}\pa_{\theta}\tilde{H}_{\A,[ns]}(\HT)=\frac{1}{\sqrt{n}}\pa_{\theta}\tilde{H}_{\A,[ns]}(\T_0)+\frac{1}{n}\paa \tilde{H}_{\A,[ns]}(\T^*_{\A,n,s}) \sqrt{n}(\HT-\theta_0),
\end{eqnarray}
where $\T^*_{\A,n,s}$ is an intermediate point between $\theta_0$ and $\HT$. Noting the fact that $\pa_{\T}\tilde{H}_{\A,n}(\HT)=0$, we also have that for $s=1$,
\begin{eqnarray*}
0=\frac{1}{\sqrt{n}}\pa_{\T}\tilde{H}_{\A,n}(\HT)=\frac{1}{\sqrt{n}}\,\pa_{\T}\tilde{H}_{\A,n}(\T_0)+\frac{1}{n}\paa \tilde{H}_{\A,n}(\T^*_{\A,n,1}) \sqrt{n}(\HT-\T_0),
\end{eqnarray*}
and thus it can be written that
\begin{eqnarray}\label{Taylor2}
\sqrt{n}(\HT-\T_0)= \frac{1}{1+\A}J_\A^{-1}\frac{1}{\sqrt{n}}\,\pa_{\T}\tilde{H}_{\A,n}(\T_0)+\frac{1}{1+\A}J_\A^{-1}(B_{\A,n}+(1+\A)J_\A)\sqrt{n}(\HT-\T_0),
\end{eqnarray}
where $B_{\A,n}=\paa\tilde{H}_{\A,n}(\T^*_{\A,n,1})/n$. Hence, by substituting  $(\ref{Taylor2})$ into  $(\ref{Taylor1})$, we can express that
\begin{eqnarray*}
\frac{1}{\sqrt{n}}\pa_{\theta}\tilde{H}_{\A,[ns]}(\HT)=I_n+II_n+III_n
\end{eqnarray*}
where
\begin{eqnarray*}
I_n&:=&\frac{1}{\sqrt{n}}\pa_{\theta}\tilde{H}_{\A,[ns]}(\T_0)
-\frac{[ns]}{n}\frac{1}{\sqrt{n}}\,\pa_{\theta}\tilde{H}_{\A,n}(\theta_0)\\
II_n&:=&\frac{1}{n}\paa\tilde{H}_{\A,[ns]}(\T^*_{\A,n,s})
\frac{1}{1+\A}J_\A^{-1}\frac{1}{\sqrt{n}}\,\pa_{\T}\tilde{H}_{\A,n}(\T_0)
+\frac{[ns]}{n}\frac{1}{\sqrt{n}}\,\pa_{\theta}\tilde{H}_{\A,n}(\theta_0)\\
III_n&:=&\frac{1}{n}\paa \tilde{H}_{\A,[ns]}(\T^*_{\A,n,s})
\frac{1}{1+\A}J_\A^{-1}(B_{\A,n}+(1+\A)J_\A)\sqrt{n}(\HT-\T_0).
\end{eqnarray*}
We first note that by Lemma \ref{Lm1} in the Appendix,
\[\frac{1}{1+\A}K_\A^{-1/2} I_n \stackrel{w}{\longrightarrow}
 B^o_d(s)\quad \rm{in}\ \ \mathbb{D}\,\big( [0,1],\, \mathbb{R}^d\big)\]
where $B^o_d$ is a $d$-dimensional standard Brownian bridge. Furthermore, due to  Lemmas \ref{Lm4} and \ref{Lm5} below, one can see that $II_n$ and $III_n$ are asymptotically negligible, respectively. Hence, combining the above arguments, we obtain the following main result.

%Hence, we have
%\begin{eqnarray*}
%\frac{1}{\sqrt{n}}\pa_{\theta}\tilde{H}_{\A,[ns]}(\HT)=I_n+II_n,
%\end{eqnarray*}
%where
%\begin{eqnarray*}
%I_n&:=&\frac{1}{\sqrt{n}}\pa_{\theta}\tilde{H}_{\A,[ns]}(\T_0)+\frac{1}{n}\paa\tilde{H}_{\A,[ns]}(\T^*_{\A,n,s})
%\frac{1}{1+\A}J_\A^{-1}\frac{1}{\sqrt{n}}\,\pa_{\T}\tilde{H}_{\A,n}(\T_0)\\
%II_n&:=&\frac{1}{n}\paa \tilde{H}_{\A,[ns]}(\T^*_{\A,n,s})
%\frac{1}{1+\A}J_\A^{-1}(B_{\A,n}+(1+\A)J_\A)\sqrt{n}(\HT-\T_0)
%\end{eqnarray*}
%Using Lemma \ref{Lm5} below together with the fact that $\sqrt{n}(\hat{\theta}_{\A,n}-\theta_0)=O_P(1)$, it can be seen that the last term in the RHS of the above equation is asymptotically negligible. Hence, by Lemma \ref{Lm4}, we obtain the main result.

%####################################  Theorem 2  #########################################################
\begin{theorem}\label{Thm_Score}
Suppose that the assumptions {\bf A1}- {\bf A8} hold. Then, under $H_0$, we
have
\begin{eqnarray*}
\frac{1}{1+\A}K_\A^{-1/2}\frac{1}{\sqrt{n}}\pa_{\theta}\tilde{H}_{\A,[ns]}(\HT)\stackrel{w}{\longrightarrow}\,B^o_d(s)\quad
\rm{in}\ \ \mathbb{D}\,\big( [0,1],\, \mathbb{R}^d\big)\,,
\end{eqnarray*}
thus
\begin{eqnarray*}
T_{n}^{\A}:=\frac{1}{(1+\A)^2}\max_{1\leq k \leq n}
\frac{1}{n}\pa_{\theta}\tilde{H}_{\A,k}(\HT)^T \hat{K}_\A^{-1}
\pa_{\theta}\tilde{H}_{\A,k}(\HT)\stackrel{d}{\longrightarrow}
\sup_{0\leq s \leq 1} \big\|B_d^o(s)\big\|_2^2,
\end{eqnarray*}
where
$\hat{K}_\A$ is a consistent estimator of $K_\A$.
We reject $H_0$ if $T_{n}^{\A}$ is large.
\end{theorem}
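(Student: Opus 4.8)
The plan is to build on the decomposition $\frac{1}{\sqrt{n}}\pa_{\theta}\tilde{H}_{\A,[ns]}(\HT)=I_n+II_n+III_n$ already displayed and to establish the three facts asserted just before the statement: that $(1+\A)^{-1}K_\A^{-1/2}I_n$ converges weakly to $B^o_d$ in $\mathbb{D}([0,1],\mathbb{R}^d)$, while $II_n$ and $III_n$ vanish uniformly in $s$. The weak-convergence claim then follows from Slutsky's theorem in $\mathbb{D}([0,1],\mathbb{R}^d)$, and the limit law of $T_n^\A$ is obtained by the continuous mapping theorem.

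First, for the leading term I would show that $\{\pa_\theta l_{\A,t}(\T_0),\mathcal{F}_t\}$ is a stationary, ergodic, square-integrable martingale difference sequence; this is the key structural point. Differentiating $l_{\A,t}$ through $\pa_\theta\lambda_t$ produces, up to the factor $(1+\A)\pa_\theta\lambda_t/\lambda_t$, a term of the form $p_{\lambda_t}(X_t)^\A(X_t-\lambda_t)$ minus its Poisson average $\sum_y p_{\lambda_t}(y)^{1+\A}(y-\lambda_t)$, where $p_\lambda$ is the Poisson pmf. Since $X_t\mid\mathcal{F}_{t-1}\sim\mathrm{Poisson}(\lambda_t)$ with $\lambda_t$ being $\mathcal{F}_{t-1}$-measurable, the conditional expectation of this difference given $\mathcal{F}_{t-1}$ is exactly zero. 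Stationarity and ergodicity follow from {\bf A1}, and finiteness of the second moment from {\bf A6} together with the moment bounds on $X_t,\lambda_t$ that {\bf A1} guarantees. A functional CLT for stationary ergodic martingale differences then yields $\frac{1}{\sqrt{n}}\sum_{t=1}^{[ns]}\pa_\theta l_{\A,t}(\T_0)\stackrel{w}{\to}(1+\A)K_\A^{1/2}W_d(s)$ for a $d$-dimensional standard Brownian motion $W_d$, the covariance matching the definition of $K_\A$. The ``bridging'' difference that defines $I_n$, namely $\frac{1}{\sqrt{n}}\sum_{t=1}^{[ns]}-\frac{[ns]}{n}\frac{1}{\sqrt{n}}\sum_{t=1}^{n}$, turns $W_d(s)$ into $W_d(s)-sW_d(1)=B^o_d(s)$ after standardization by $(1+\A)^{-1}K_\A^{-1/2}$. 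The replacement of $l$ by $\tilde l$ (the effect of the arbitrary $\tilde\lambda_1$) is controlled uniformly in $s$ by {\bf A7}: the geometric bound $V\rho^t$ makes $\sup_{s}\frac{1}{\sqrt{n}}\big\|\sum_{t=1}^{[ns]}(\pa_\theta\tilde l_{\A,t}-\pa_\theta l_{\A,t})(\T_0)\big\|=O_P(n^{-1/2})$. This is the content of Lemma \ref{Lm1}.

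For the remainder terms I would combine Theorem \ref{thm1} with a uniform law of large numbers for the normalized Hessian. By Theorem \ref{thm1}, $\sqrt{n}(\HT-\T_0)=O_P(1)$ and $\frac{1}{\sqrt{n}}\pa_\T\tilde{H}_{\A,n}(\T_0)=O_P(1)$. Because the intermediate points $\T^*_{\A,n,s}$ lie between $\T_0$ and $\HT$ and $\HT\to\T_0$ a.s., a uniform ergodic theorem gives $\frac{1}{n}\paa\tilde{H}_{\A,[ns]}(\T^*_{\A,n,s})\to-(1+\A)\,s\,J_\A$ uniformly in $s\in[0,1]$. In $II_n$ the first summand is then asymptotically $-\frac{[ns]}{n}\frac{1}{\sqrt{n}}\pa_\theta\tilde{H}_{\A,n}(\T_0)$, which cancels the second summand, giving $\sup_s\|II_n\|=o_P(1)$ (Lemma \ref{Lm4}). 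In $III_n$ the factor $B_{\A,n}+(1+\A)J_\A=\frac{1}{n}\paa\tilde{H}_{\A,n}(\T^*_{\A,n,1})+(1+\A)J_\A\to 0$, while the remaining factors are $O_P(1)$, so $\sup_s\|III_n\|=o_P(1)$ (Lemma \ref{Lm5}).

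Finally, for $T_n^\A$ I would observe that $\pa_\theta\tilde{H}_{\A,k}(\HT)=\pa_\theta\tilde{H}_{\A,[ns]}(\HT)$ at $s=k/n$, so $T_n^\A$ is the image of the standardized process under the functional $x\mapsto\sup_{0\le s\le1}\|x(s)\|_2^2$. Replacing $\hat{K}_\A$ by $K_\A$ is justified by the assumed consistency $\hat{K}_\A\to K_\A$ and Slutsky's theorem, and since the limiting Brownian bridge has continuous paths the weak convergence in $\mathbb{D}$ transfers to the supremum, yielding $T_n^\A\stackrel{d}{\to}\sup_{0\le s\le1}\|B^o_d(s)\|_2^2$. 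The main obstacle I anticipate is not the martingale CLT itself but securing the required \emph{uniformity in $s$} simultaneously with uniformity in $\theta$ over a neighborhood of $\T_0$: handling $\frac{1}{n}\paa\tilde{H}_{\A,[ns]}(\T^*_{\A,n,s})$ at the random, $s$-dependent intermediate points and proving the uniform negligibility of the initialization error demand maximal-inequality and Glivenko--Cantelli-type arguments rather than pointwise convergence, and this is where most of the technical effort in Lemmas \ref{Lm1}, \ref{Lm4} and \ref{Lm5} will lie.
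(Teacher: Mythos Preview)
Your proposal is correct and follows essentially the same route as the paper: the same three-term decomposition, the martingale functional CLT followed by the bridge transformation for $I_n$ (Lemma~\ref{Lm1}), the uniform-in-$s$ Hessian approximation $\frac{1}{n}\paa\tilde{H}_{\A,[ns]}(\T^*_{\A,n,s})\approx -(1+\A)\,\frac{[ns]}{n}\,J_\A$ for the cancellation in $II_n$ (Lemma~\ref{Lm4}), and the $o_P(1)\cdot O_P(1)$ bound for $III_n$ (Lemma~\ref{Lm5}). The paper handles the uniformity you flag by an $\ep$-neighborhood continuity bound (Lemma~\ref{Lm3}) together with a $k\le\sqrt{n}$ versus $k>\sqrt{n}$ split of the ergodic-theorem term, which is exactly the kind of argument your ``Glivenko--Cantelli / maximal-inequality'' remark points to.
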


\begin{remark}
As a consistent estimator of $K_\A$, one can consider to use
\[\hat K_{\alpha}=\frac{1}{(1+\alpha)^2}\frac{1}{n}\sum_{t=1}^{n} \partial_\theta \tilde l_{\alpha,t}(\hat\theta_{\alpha,n})
\partial_{\theta^T} \tilde l_{\alpha,t}(\hat\theta_{\alpha,n}).\]
For the consistency of the estimator, see Kang and Song (2015).
\end{remark}

\begin{remark}\label{RM3}
Since the MDPDE with $\alpha$=0 is the MLE,
$T_n^{\A}$ with $\A$=0 becomes the score test in Kang and Song (2017) given by
\[T_n:=\max_{1\leq k \leq n}\frac{1}{n}\pa_{\theta}\tilde{L}_{k}(\hat{\theta}_n)^T \hat I_n^{-1}
\pa_{\theta}\tilde{L}_{k}(\hat{\theta}_n)\stackrel{d}{\longrightarrow}
\sup_{0\leq s \leq 1} \big\|B_d^o(s)\big\|_2^2, \]
where $\pa_{\theta}\tilde{L}_{n}(\T)$ is the score function and $\hat I_n$ is a consistent estimator of the information matrix.
\end{remark}

%#############################################################################################################
%############################### Section4 ####################################################################
%#############################################################################################################

\section{Simulation study}
In this section, we evaluate the performance of $T_{n}^{\A}$ with $\alpha>0$ and compare it with the score test $T_{n}$ in Remark \ref{RM3}.
For this task, we consider the following INGARCH(1,1) models :
\begin{eqnarray}\label{INGARCH}
X_t|\mathcal{F}_{t-1}\sim Poisson(\lambda_t),~~\lambda_t= w +a
\lambda_{t-1}+b X_{t-1},
\end{eqnarray}
where $\lambda_1$ is assumed to be 0.  The sample sizes under consideration are $n$=300 and 500.
The method of moment estimates are
used as initial estimates for optimization procedure. For each simulation, the first 1,000 initial observations are discarded to avoid initialization effects.
The empirical sizes and powers are calculated as the proportion of the number of rejections of the null hypothesis based on 1,000 repetitions. The critical values corresponding to the nominal level 5\% and 10\% are 3.027 and 2.604, respectively, which are obtained through Monte Carlo simulations.

%############################### Table1 ###################################################################
\begin{table}[t]
  \centering
    {\footnotesize
  \tabcolsep=3.2pt
  \caption{Empirical sizes of $T_n$ and $T_n^{\A}$ with no outliers.}\vspace{0cm}
\begin{tabular}{ccccccccccccccccccccc}
\toprule
      &       &       &       & \multicolumn{2}{c}{\multirow{2}[4]{*}{$T_n$}} &       & \multicolumn{14}{c}{$T_n^\alpha$ } \\
\cmidrule{8-21}      &       &       &       & \multicolumn{2}{c}{} &       & \multicolumn{2}{c}{$\alpha=0.1$} &       & \multicolumn{2}{c}{$\alpha=0.2$} &       & \multicolumn{2}{c}{$\alpha=0.3$} &       & \multicolumn{2}{c}{$\alpha=0.5$} &       & \multicolumn{2}{c}{$\alpha=1$} \\
\cmidrule{5-6}\cmidrule{8-9}\cmidrule{11-12}\cmidrule{14-15}\cmidrule{17-18}\cmidrule{20-21}$\theta$ &       & $n$   &       & 5\%   & 10\%  &       & 5\%   & 10\%  &       & 5\%   & 10\%  &       & 5\%   & 10\%  &       & 5\%   & 10\%  &       & 5\%   & 10\% \\
\midrule
(2,0.1,0.2) &       & 300   &       & 0.062 & 0.106 &       & 0.066 & 0.110 &       & 0.065 & 0.110 &       & 0.062 & 0.108 &       & 0.065 & 0.110 &       & 0.070 & 0.110 \\
      &       & 500   &       & 0.060 & 0.114 &       & 0.064 & 0.115 &       & 0.060 & 0.111 &       & 0.058 & 0.110 &       & 0.060 & 0.113 &       & 0.065 & 0.114 \\
\cmidrule{3-21}(2,0.1,0.4) &       & 300   &       & 0.056 & 0.096 &       & 0.051 & 0.104 &       & 0.052 & 0.104 &       & 0.053 & 0.104 &       & 0.051 & 0.103 &       & 0.050 & 0.100 \\
      &       & 500   &       & 0.056 & 0.091 &       & 0.050 & 0.096 &       & 0.056 & 0.096 &       & 0.056 & 0.090 &       & 0.056 & 0.099 &       & 0.060 & 0.100 \\
\cmidrule{3-21}(2,0.1,0.7) &       & 300   &       & 0.030 & 0.072 &       & 0.048 & 0.104 &       & 0.050 & 0.106 &       & 0.051 & 0.104 &       & 0.052 & 0.100 &       & 0.054 & 0.103 \\
      &       & 500   &       & 0.038 & 0.078 &       & 0.054 & 0.108 &       & 0.058 & 0.106 &       & 0.058 & 0.106 &       & 0.054 & 0.108 &       & 0.054 & 0.108 \\
\bottomrule
\end{tabular}%
  \label{tab1}\vspace{0.5cm}

   \tabcolsep=4pt
    \caption{Empirical powers of $T_n$ and $T_n^{\A}$ at nominal level 5\%  with no outliers.}\vspace{0cm}
    \tabcolsep=5pt
    \begin{tabular}{lccccccccc}
    \toprule
          &       &       & \multirow{2}[4]{*}{$T_n$} &       & \multicolumn{5}{c}{$T_n^\alpha$ } \\
\cmidrule{6-10}   $\theta\ \rightarrow\ \theta'$ & $n$     &       &       &       & 0.1 & 0.2 & 0.3 & 0.5 & 1.0 \\
    \midrule
    (2,0.1,0.2)$\rightarrow$(2.5,0.1,0.2) & 300   &       & 0.324 &       & 0.306 & 0.291 & 0.272 & 0.241 & 0.186 \\
          & 500   &       & 0.644 &       & 0.636 & 0.612 & 0.585 & 0.534 & 0.392 \\
\cmidrule{2-10}    (2,0.1,0.2)$\rightarrow$(2,0.3,0.2) & 300   &       & 0.502 &       & 0.472 & 0.454 & 0.430 & 0.378 & 0.272 \\
          & 500   &       & 0.835 &       & 0.822 & 0.803 & 0.772 & 0.706 & 0.535 \\
\cmidrule{2-10}    (2,0.1,0.2)$\rightarrow$(2,0.1,0.4) & 300   &       & 0.615 &       & 0.586 & 0.571 & 0.548 & 0.492 & 0.380 \\
          & 500   &       & 0.930 &       & 0.922 & 0.911 & 0.894 & 0.859 & 0.712 \\
    \bottomrule
    \end{tabular}%
  \label{tab2}}

\end{table}%

We first address the case in which the data are not contaminated by outliers. To calculate empirical sizes, observations are generated from the model $(\ref{INGARCH})$ with $\theta=(w,a,b)$=(2,0.1,0.2), (2,0.1,0.4), and (2,0.1,0.7). For the empirical powers, we consider the alternatives that $\theta$ changes from $(2,0.1,0.2)$ to $\theta'$=(2.5,0.1,0.2), (2,0.3,0.2) and (2,0.1,0.4) at the middle time $t=[n/2]$. Tables \ref{tab1} and \ref{tab2} provide the results for uncontaminated cases. From Table \ref{tab1}, we can see that both $T_n$ and $T_n^{\A}$ produce appropriate empirical
sizes. Even for the case that $a+b$ is close to unity, no size distortion is observed. Here, we note the MDPD and ML estimates-based CUSUM tests yielded distorted sizes in the same parameter setting, see Kang and Song (2015) for more details. It can also be seen from Table \ref{tab2} that
$T_n$ and $T_n^\A$ with $\A$ close to 0 produce reasonably good powers. The power of $T_n^\A$, however, shows a  tendency  to decrease with an increase in $\A$. As expected, $T_n$ shows best performance and $T_n^{\A}$ with $\A$ close to 0 performs similarly to $T_n$.

\begin{table}[]
\vspace{-0.9cm}
 \centering
   {\footnotesize
  \tabcolsep=3pt
  \renewcommand{\arraystretch}{0.95}
  \caption{Empirical sizes and $d_\A$ of $T_n$ and $T_n^{\A}$ at nominal level 5\% when $p$=0.01 and $\gamma$=10.}\vspace{0cm}
\begin{tabular}{ccccccccccccccccccccc}\toprule
      &       & $AO$  &       &       &       & \multicolumn{5}{c}{$T_n^\alpha$ }     &       & $IO$  &       &       &       & \multicolumn{5}{c}{$T_n^\alpha$ } \\
\cmidrule{7-11}\cmidrule{17-21} $\theta$ &       & $n$   &       & $T_n$ &       & 0.1   & 0.2   & 0.3   & 0.5   & 1.0   &       & $n$   &       & $T_n$ &       & 0.1   & 0.2   & 0.3   & 0.5   & 1.0 \\
\midrule
(2,0.1,0.2) &       & 300   &       & 0.152 &       & 0.063 & 0.068 & 0.069 & 0.065 & 0.066 &       & 300   &       & 0.158 &       & 0.062 & 0.060 & 0.058 & 0.062 & 0.058 \\
      &       &       &       & [2.45] &       & [0.95] & [1.05] & [1.11] & [1.00] & [0.94] &       &       &       & [2.55] &       & [0.94] & [0.92] & [0.94] & [0.95] & [0.83] \\
\cmidrule{3-11}\cmidrule{13-21}      &       & 500   &       & 0.172 &       & 0.068 & 0.071 & 0.070 & 0.068 & 0.065 &       & 500   &       & 0.196 &       & 0.076 & 0.071 & 0.069 & 0.068 & 0.068 \\
      &       &       &       & [2.87] &       & [1.06] & [1.18] & [1.21] & [1.13] & [1.00] &       &       &       & [3.27] &       & [1.19] & [1.18] & [1.19] & [1.13] & [1.05] \\
\midrule
(2,0.1,0.4) &       & 300   &       & 0.122 &       & 0.044 & 0.048 & 0.052 & 0.048 & 0.053 &       & 300   &       & 0.127 &       & 0.054 & 0.054 & 0.053 & 0.051 & 0.058 \\
      &       &       &       & [2.18] &       & [0.86] & [0.92] & [0.98] & [0.94] & [1.06] &       &       &       & [2.27] &       & [1.06] & [1.04] & [1.00] & [1.00] & [1.16] \\
\cmidrule{3-11}\cmidrule{13-21}      &       & 500   &       & 0.147 &       & 0.050 & 0.055 & 0.064 & 0.061 & 0.066 &       & 500   &       & 0.137 &       & 0.052 & 0.049 & 0.053 & 0.056 & 0.054 \\
      &       &       &       & [2.63] &       & [1.00] & [0.98] & [1.14] & [1.09] & [1.10] &       &       &       & [2.45] &       & [1.04] & [0.88] & [0.95] & [1.00] & [0.90] \\
\midrule
(2,0.1,0.7) &       & 300   &       & 0.088 &       & 0.049 & 0.052 & 0.054 & 0.055 & 0.060 &       & 300   &       & 0.072 &       & 0.047 & 0.051 & 0.052 & 0.052 & 0.052 \\
      &       &       &       & [2.93] &       & [1.02] & [1.04] & [1.06] & [1.06] & [1.11] &       &       &       & [2.40] &       & [0.98] & [1.02] & [1.02] & [1.00] & [0.96] \\
\cmidrule{3-11}\cmidrule{13-21}      &       & 500   &       & 0.086 &       & 0.056 & 0.055 & 0.058 & 0.058 & 0.060 &       & 500   &       & 0.078 &       & 0.054 & 0.059 & 0.057 & 0.058 & 0.056 \\
      &       &       &       & [2.26] &       & [1.04] & [0.95] & [1.00] & [1.07] & [1.11] &       &       &       & [2.05] &       & [1.00] & [1.02] & [0.98] & [1.07] & [1.04] \\
\bottomrule
 \multicolumn{21}{l}{The figures in the brackets represent $d_\A$.}
\end{tabular}
  \label{tab3}\vspace{0.25cm}

   \caption{Empirical sizes and $d_\A$ of $T_n$ and $T_n^{\A}$ at nominal level 5\%  when  $p$=0.01 and $\gamma$=20.}\vspace{0cm}
  \begin{tabular}{ccccccccccccccccccccc}
  \toprule
      &       & $AO$  &       &       &       & \multicolumn{5}{c}{$T_n^\alpha$ }     &       & $IO$  &       &       &       & \multicolumn{5}{c}{$T_n^\alpha$ } \\
\cmidrule{7-11}\cmidrule{17-21} $\theta$ &       & $n$   &       & $T_n$ &       & 0.1   & 0.2   & 0.3   & 0.5   & 1.0   &       & $n$   &       & $T_n$ &       & 0.1   & 0.2   & 0.3   & 0.5   & 1.0 \\
\midrule
(2,0.1,0.2) &       & 300   &       & 0.398 &       & 0.063 & 0.060 & 0.063 & 0.066 & 0.062 &       & 300   &       & 0.414 &       & 0.060 & 0.062 & 0.059 & 0.060 & 0.069 \\
      &       &       &       & [6.42] &       & [0.95] & [0.92] & [1.02] & [1.02] & [0.89] &       &       &       & [6.68] &       & [0.91] & [0.95] & [0.95] & [0.92] & [0.99] \\
\cmidrule{3-11}\cmidrule{13-21}      &       & 500   &       & 0.468 &       & 0.074 & 0.072 & 0.067 & 0.068 & 0.064 &       & 500   &       & 0.464 &       & 0.058 & 0.056 & 0.056 & 0.052 & 0.060 \\
      &       &       &       & [7.80] &       & [1.16] & [1.20] & [1.16] & [1.13] & [0.98] &       &       &       & [7.73] &       & [0.91] & [0.93] & [0.97] & [0.87] & [0.92] \\
\midrule
(2,0.1,0.4) &       & 300   &       & 0.407 &       & 0.068 & 0.070 & 0.069 & 0.068 & 0.065 &       & 300   &       & 0.353 &       & 0.048 & 0.055 & 0.056 & 0.059 & 0.065 \\
      &       &       &       & [7.27] &       & [1.33] & [1.35] & [1.30] & [1.33] & [1.03] &       &       &       & [6.30] &       & [0.94] & [1.06] & [1.06] & [1.16] & [1.30] \\
\cmidrule{3-11}\cmidrule{13-21}      &       & 500   &       & 0.444 &       & 0.074 & 0.071 & 0.068 & 0.062 & 0.064 &       & 500   &       & 0.356 &       & 0.050 & 0.056 & 0.056 & 0.050 & 0.060 \\
      &       &       &       & [7.93] &       & [1.48] & [1.27] & [1.21] & [1.11] & [1.07] &       &       &       & [6.36] &       & [1.00] & [1.00] & [1.00] & [0.89] & [1.00] \\
\midrule
(2,0.1,0.7) &       & 300   &       & 0.274 &       & 0.054 & 0.055 & 0.056 & 0.055 & 0.062 &       & 300   &       & 0.184 &       & 0.044 & 0.046 & 0.048 & 0.053 & 0.054 \\
      &       &       &       & [9.13] &       & [1.13] & [1.10] & [1.10] & [1.06] & [1.15] &       &       &       & [6.13] &       & [0.92] & [0.92] & [0.94] & [1.02] & [1.00] \\
\cmidrule{3-11}\cmidrule{13-21}      &       & 500   &       & 0.315 &       & 0.048 & 0.054 & 0.058 & 0.058 & 0.056 &       & 500   &       & 0.179 &       & 0.047 & 0.046 & 0.050 & 0.054 & 0.056 \\
      &       &       &       & [8.29] &       & [0.89] & [0.93] & [1.00] & [1.07] & [1.04] &       &       &       & [4.71] &       & [0.87] & [0.79] & [0.86] & [1.00] & [1.04] \\
\bottomrule
\end{tabular}
 \label{tab4}\vspace{0.25cm}
%}
%\end{table}%
%\begin{table}[]
%  \centering
%   {\footnotesize
%  \tabcolsep=3pt
 % \renewcommand{\arraystretch}{0.9}
\caption{Empirical sizes and $d_\A$ of $T_n$ and $T_n^{\A}$ at nominal level 5\% when $p$=0.03 and $\gamma$=10.}\vspace{0cm}
\begin{tabular}{ccccccccccccccccccccc}\toprule
      &       & $AO$  &       &       &       & \multicolumn{5}{c}{$T_n^\alpha$ }     &       & $IO$  &       &       &       & \multicolumn{5}{c}{$T_n^\alpha$ } \\
\cmidrule{7-11}\cmidrule{17-21} $\theta$ &       & $n$   &       & $T_n$ &       & 0.1   & 0.2   & 0.3   & 0.5   & 1.0   &       & $n$   &       & $T_n$ &       & 0.1   & 0.2   & 0.3   & 0.5   & 1.0 \\
\midrule
(2,0.1,0.2) &       & 300   &       & 0.370 &       & 0.066 & 0.070 & 0.071 & 0.068 & 0.064 &       & 300   &       & 0.372 &       & 0.064 & 0.059 & 0.064 & 0.064 & 0.068 \\
      &       &       &       & [5.97] &       & [1.00] & [1.08] & [1.15] & [1.05] & [0.91] &       &       &       & [6.00] &       & [0.97] & [0.91] & [1.03] & [0.98] & [0.97] \\
\cmidrule{3-11}\cmidrule{13-21}      &       & 500   &       & 0.356 &       & 0.071 & 0.075 & 0.070 & 0.067 & 0.070 &       & 500   &       & 0.402 &       & 0.067 & 0.062 & 0.062 & 0.064 & 0.068 \\
      &       &       &       & [5.93] &       & [1.11] & [1.25] & [1.21] & [1.12] & [1.08] &       &       &       & [6.70] &       & [1.05] & [1.03] & [1.07] & [1.07] & [1.05] \\
\midrule
(2,0.1,0.4) &       & 300   &       & 0.335 &       & 0.070 & 0.066 & 0.068 & 0.066 & 0.062 &       & 300   &       & 0.300 &       & 0.054 & 0.050 & 0.052 & 0.058 & 0.060 \\
      &       &       &       & [5.98] &       & [1.37] & [1.27] & [1.28] & [1.29] & [1.24] &       &       &       & [5.36] &       & [1.06] & [0.96] & [0.98] & [1.14] & [1.20] \\
\cmidrule{3-11}\cmidrule{13-21}      &       & 500   &       & 0.304 &       & 0.046 & 0.054 & 0.054 & 0.057 & 0.057 &       & 500   &       & 0.310 &       & 0.048 & 0.046 & 0.046 & 0.047 & 0.046 \\
      &       &       &       & [5.43] &       & [0.92] & [0.96] & [0.96] & [1.02] & [0.95] &       &       &       & [5.54] &       & [0.96] & [0.82] & [0.82] & [0.84] & [0.77] \\
\midrule
(2,0.1,0.7) &       & 300   &       & 0.203 &       & 0.062 & 0.065 & 0.070 & 0.064 & 0.058 &       & 300   &       & 0.146 &       & 0.050 & 0.046 & 0.048 & 0.052 & 0.054 \\
      &       &       &       & [6.77] &       & [1.29] & [1.30] & [1.37] & [1.23] & [1.07] &       &       &       & [4.87] &       & [1.04] & [0.92] & [0.94] & [1.00] & [1.00] \\
\cmidrule{3-11}\cmidrule{13-21}      &       & 500   &       & 0.199 &       & 0.045 & 0.050 & 0.056 & 0.058 & 0.054 &       & 500   &       & 0.160 &       & 0.043 & 0.045 & 0.044 & 0.044 & 0.050 \\
      &       &       &       & [5.24] &       & [0.83] & [0.86] & [0.97] & [1.07] & [1.00] &       &       &       & [4.21] &       & [0.80] & [0.78] & [0.76] & [0.81] & [0.93] \\
\bottomrule
\end{tabular}
 \label{tab5}

}
\end{table}%

\begin{table}[]
\vspace{-0.9cm}
  \centering
   {\footnotesize
  \tabcolsep=3pt
  \renewcommand{\arraystretch}{0.95}
  \caption{Empirical powers and $d_\A$ of $T_n$ and $T_n^{\A}$ at nominal level 5\% when $p$=0.01 and $\gamma$=10.}\vspace{0cm}
\begin{tabular}{lcccccccccccccccccccc}
\toprule
      &       & $AO$  &       &       &       & \multicolumn{5}{c}{$T_n^\alpha$ }     &       & $IO$  &       &       &       & \multicolumn{5}{c}{$T_n^\alpha$ } \\
\cmidrule{7-11}\cmidrule{17-21} $\theta\ \rightarrow\ \theta'$ &       & $n$   &       & $T_n$ &       & 0.1   & 0.2   & 0.3   & 0.5   & 1.0   &       & $n$   &       & $T_n$ &       & 0.1   & 0.2   & 0.3   & 0.5   & 1.0 \\
\midrule
(2,0.1,0.2) &       & 300   &       & 0.436 &       & 0.302 & 0.313 & 0.304 & 0.278 & 0.216 &       & 300   &       & 0.440 &       & 0.300 & 0.293 & 0.282 & 0.250 & 0.192 \\
$\rightarrow$(2.5,0.1,0.2) &       &       &       & [1.35] &       & [0.99] & [1.08] & [1.12] & [1.15] & [1.16] &       &       &       & [1.36] &       & [0.98] & [1.01] & [1.04] & [1.04] & [1.03] \\
\cmidrule{3-11}\cmidrule{13-21}      &       & 500   &       & 0.718 &       & 0.624 & 0.635 & 0.610 & 0.556 & 0.418 &       & 500   &       & 0.708 &       & 0.630 & 0.638 & 0.610 & 0.554 & 0.418 \\
      &       &       &       & [1.11] &       & [0.98] & [1.04] & [1.04] & [1.04] & [1.07] &       &       &       & [1.10] &       & [0.99] & [1.04] & [1.04] & [1.04] & [1.07] \\
\midrule
(2,0.1,0.2) &       & 300   &       & 0.596 &       & 0.491 & 0.500 & 0.481 & 0.436 & 0.322 &       & 300   &       & 0.672 &       & 0.556 & 0.548 & 0.522 & 0.456 & 0.329 \\
$\rightarrow$(2,0.3,0.2) &       &       &       & [1.19] &       & [1.04] & [1.10] & [1.12] & [1.15] & [1.18] &       &       &       & [1.34] &       & [1.18] & [1.21] & [1.21] & [1.21] & [1.21] \\
\cmidrule{3-11}\cmidrule{13-21}      &       & 500   &       & 0.860 &       & 0.811 & 0.814 & 0.802 & 0.756 & 0.610 &       & 500   &       & 0.928 &       & 0.910 & 0.904 & 0.884 & 0.835 & 0.671 \\
      &       &       &       & [1.03] &       & [0.99] & [1.01] & [1.04] & [1.07] & [1.14] &       &       &       & [1.11] &       & [1.11] & [1.13] & [1.15] & [1.18] & [1.25] \\
\midrule
(2,0.1,0.2) &       & 300   &       & 0.700 &       & 0.547 & 0.554 & 0.549 & 0.515 & 0.400 &       & 300   &       & 0.702 &       & 0.590 & 0.589 & 0.570 & 0.524 & 0.402 \\
$\rightarrow$(2,0.1,0.4) &       &       &       & [1.14] &       & [0.93] & [0.97] & [1.00] & [1.05] & [1.05] &       &       &       & [1.14] &       & [1.01] & [1.03] & [1.04] & [1.07] & [1.06] \\
\cmidrule{3-11}\cmidrule{13-21}      &       & 500   &       & 0.954 &       & 0.900 & 0.902 & 0.890 & 0.856 & 0.720 &       & 500   &       & 0.938 &       & 0.926 & 0.926 & 0.922 & 0.886 & 0.767 \\
      &       &       &       & [1.03] &       & [0.98] & [0.99] & [1.00] & [1.00] & [1.01] &       &       &       & [1.01] &       & [1.00] & [1.02] & [1.03] & [1.03] & [1.08] \\
\bottomrule
 \multicolumn{21}{l}{The figures in the brackets represent $d_\A$.}
\end{tabular}%
  \label{tab6}\vspace{0.25cm}

   \caption{Empirical powers and $d_\A$ of $T_n$ and $T_n^{\A}$ at nominal level 5\%  when  $p$=0.01 and $\gamma$=20.}\vspace{0cm}
 \begin{tabular}{lcccccccccccccccccccc}
\toprule
      &       & $AO$  &       &       &       & \multicolumn{5}{c}{$T_n^\alpha$ }     &       & $IO$  &       &       &       & \multicolumn{5}{c}{$T_n^\alpha$ } \\
\cmidrule{7-11}\cmidrule{17-21} $\theta\ \rightarrow\ \theta'$ &       & $n$   &       & $T_n$ &       & 0.1   & 0.2   & 0.3   & 0.5   & 1.0   &       & $n$   &       & $T_n$ &       & 0.1   & 0.2   & 0.3   & 0.5   & 1.0 \\
\midrule
(2,0.1,0.2) &       & 300   &       & 0.650 &       & 0.420 & 0.421 & 0.398 & 0.346 & 0.251 &       & 300   &       & 0.609 &       & 0.348 & 0.348 & 0.327 & 0.294 & 0.216 \\
$\rightarrow$(2.5,0.1,0.2) &       &       &       & [2.01] &       & [1.37] & [1.45] & [1.46] & [1.44] & [1.35] &       &       &       & [1.88] &       & [1.14] & [1.20] & [1.20] & [1.22] & [1.16] \\
\cmidrule{3-11}\cmidrule{13-21}      &       & 500   &       & 0.842 &       & 0.730 & 0.722 & 0.702 & 0.636 & 0.472 &       & 500   &       & 0.800 &       & 0.642 & 0.633 & 0.617 & 0.566 & 0.441 \\
      &       &       &       & [1.31] &       & [1.15] & [1.18] & [1.20] & [1.19] & [1.20] &       &       &       & [1.24] &       & [1.01] & [1.03] & [1.05] & [1.06] & [1.13] \\
\midrule
(2,0.1,0.2) &       & 300   &       & 0.802 &       & 0.674 & 0.670 & 0.640 & 0.572 & 0.421 &       & 300   &       & 0.813 &       & 0.660 & 0.658 & 0.622 & 0.559 & 0.395 \\
$\rightarrow$(2,0.3,0.2) &       &       &       & [1.60] &       & [1.43] & [1.48] & [1.49] & [1.51] & [1.55] &       &       &       & [1.62] &       & [1.40] & [1.45] & [1.45] & [1.48] & [1.45] \\
\cmidrule{3-11}\cmidrule{13-21}      &       & 500   &       & 0.950 &       & 0.900 & 0.898 & 0.882 & 0.850 & 0.721 &       & 500   &       & 0.966 &       & 0.961 & 0.952 & 0.936 & 0.901 & 0.760 \\
      &       &       &       & [1.14] &       & [1.09] & [1.12] & [1.14] & [1.20] & [1.35] &       &       &       & [1.16] &       & [1.17] & [1.19] & [1.21] & [1.28] & [1.42] \\
\midrule
(2,0.1,0.2) &       & 300   &       & 0.860 &       & 0.668 & 0.660 & 0.644 & 0.592 & 0.438 &       & 300   &       & 0.818 &       & 0.674 & 0.668 & 0.650 & 0.599 & 0.465 \\
$\rightarrow$(2,0.1,0.4) &       &       &       & [1.40] &       & [1.14] & [1.16] & [1.18] & [1.20] & [1.15] &       &       &       & [1.33] &       & [1.15] & [1.17] & [1.19] & [1.22] & [1.22] \\
\cmidrule{3-11}\cmidrule{13-21}      &       & 500   &       & 0.972 &       & 0.915 & 0.914 & 0.892 & 0.856 & 0.743 &       & 500   &       & 0.951 &       & 0.945 & 0.946 & 0.940 & 0.916 & 0.798 \\
      &       &       &       & [1.05] &       & [0.99] & [1.00] & [1.00] & [1.00] & [1.04] &       &       &       & [1.02] &       & [1.02] & [1.04] & [1.05] & [1.07] & [1.12] \\
\bottomrule
\end{tabular}%
 \label{tab7}\vspace{0.25cm}
%}
%\end{table}%
%\begin{table}[]
%  \centering
%   {\footnotesize
%  \tabcolsep=3pt
 % \renewcommand{\arraystretch}{0.9}
\caption{Empirical powers and $d_\A$ of $T_n$ and $T_n^{\A}$ at nominal level 5\% when $p$=0.03 and $\gamma$=10.}\vspace{0cm}
\begin{tabular}{lcccccccccccccccccccc}
\toprule
      &       & $AO$  &       &       &       & \multicolumn{5}{c}{$T_n^\alpha$ }     &       & $IO$  &       &       &       & \multicolumn{5}{c}{$T_n^\alpha$ } \\
\cmidrule{7-11}\cmidrule{17-21} $\theta\ \rightarrow\ \theta'$ &       & $n$   &       & $T_n$ &       & 0.1   & 0.2   & 0.3   & 0.5   & 1.0   &       & $n$   &       & $T_n$ &       & 0.1   & 0.2   & 0.3   & 0.5   & 1.0 \\
\midrule
(2,0.1,0.2) &       & 300   &       & 0.624 &       & 0.326 & 0.360 & 0.362 & 0.344 & 0.254 &       & 300   &       & 0.608 &       & 0.292 & 0.306 & 0.300 & 0.276 & 0.212 \\
$\rightarrow$(2.5,0.1,0.2) &       &       &       & [1.93] &       & [1.07] & [1.24] & [1.33] & [1.43] & [1.37] &       &       &       & [1.88] &       & [0.95] & [1.05] & [1.10] & [1.15] & [1.14] \\
\cmidrule{3-11}\cmidrule{13-21}      &       & 500   &       & 0.805 &       & 0.606 & 0.659 & 0.644 & 0.602 & 0.488 &       & 500   &       & 0.769 &       & 0.578 & 0.604 & 0.593 & 0.544 & 0.424 \\
      &       &       &       & [1.25] &       & [0.95] & [1.08] & [1.10] & [1.13] & [1.24] &       &       &       & [1.19] &       & [0.91] & [0.99] & [1.01] & [1.02] & [1.08] \\
\midrule
(2,0.1,0.2) &       & 300   &       & 0.787 &       & 0.530 & 0.589 & 0.584 & 0.542 & 0.401 &       & 300   &       & 0.819 &       & 0.600 & 0.626 & 0.616 & 0.554 & 0.396 \\
$\rightarrow$(2,0.3,0.2) &       &       &       & [1.57] &       & [1.12] & [1.30] & [1.36] & [1.43] & [1.47] &       &       &       & [1.63] &       & [1.27] & [1.38] & [1.43] & [1.47] & [1.46] \\
\cmidrule{3-11}\cmidrule{13-21}      &       & 500   &       & 0.956 &       & 0.840 & 0.864 & 0.860 & 0.838 & 0.710 &       & 500   &       & 0.974 &       & 0.936 & 0.942 & 0.934 & 0.899 & 0.774 \\
      &       &       &       & [1.14] &       & [1.02] & [1.08] & [1.11] & [1.19] & [1.33] &       &       &       & [1.17] &       & [1.14] & [1.17] & [1.21] & [1.27] & [1.45] \\
\midrule
(2,0.1,0.2) &       & 300   &       & 0.840 &       & 0.558 & 0.602 & 0.602 & 0.570 & 0.440 &       & 300   &       & 0.820 &       & 0.634 & 0.669 & 0.670 & 0.632 & 0.507 \\
$\rightarrow$(2,0.1,0.4) &       &       &       & [1.37] &       & [0.95] & [1.05] & [1.10] & [1.16] & [1.16] &       &       &       & [1.33] &       & [1.08] & [1.17] & [1.22] & [1.28] & [1.33] \\
\cmidrule{3-11}\cmidrule{13-21}      &       & 500   &       & 0.972 &       & 0.894 & 0.907 & 0.903 & 0.870 & 0.754 &       & 500   &       & 0.963 &       & 0.926 & 0.942 & 0.944 & 0.923 & 0.819 \\
      &       &       &       & [1.05] &       & [0.97] & [1.00] & [1.01] & [1.01] & [1.06] &       &       &       & [1.04] &       & [1.00] & [1.03] & [1.06] & [1.07] & [1.15] \\
\bottomrule
\end{tabular} \label{tab8}
}
\end{table}%

Next, we consider the cases where data are contaminated by either additive outliers (AO) or innovation outliers (IO). Following the scheme of Fried {\it et al.} (2013), we observe AO-contaminated process $\{X_{o,t}\}$ instead of $\{X_t\}$ in $(\ref{INGARCH})$, such that $X_{o,t}=X_t+p_t X_{c,t}$, where $p_t$'s are i.i.d. Bernoulli random variables with success probability $p$ and $X_{c,t}$'s are  i.i.d. Poisson random
variables with mean $\gamma$. It is assumed that $p_t$, $X_{c,t}$, and $X_t$ are all independent.  IO-contaminated samples are generated by replacing $\lambda_t$ by $\lambda_{o,t}= \lambda_t +p_t \lambda_{c,t}$, where $\lambda_{c,t}$'s are  i.i.d. Poisson random variables with mean $\gamma$ (cf. Fokianos and Fried (2010)). $p_t$, $\lambda_{c,t}$, and $\lambda_t$ are also assumed to be all independent. In both cases, simulations are conducted with $(p,\gamma)$= (0.01,10), (0.01,20), and (0.03,10).
To see the influence of outliers on each test statistics, we define the following ratio:
\[d_\A:= \frac{\text{Empirical size (resp. power) of }T_n^\A\text{ obtained in contaminated case}}{\text{Empirical size (resp. power) of }T_n^\A\text{ obtained in uncontaminated case}}.\]
Here, $d_\A$ with $\A=0$ denotes the corresponding values for the score test $T_n$. In evaluating sizes, figures relatively larger than 1 mean that size distortions are caused by outliers. For empirical powers, values less than 1 indicate power losses by outliers. If the ratio of a test is close to 1, the test can be considered robust against outliers.

Empirical sizes for contaminated cases are presented in Tables \ref{tab3}-\ref{tab5}. It should first be noted that in almost all the cases, $T_n$  yields empirical sizes much larger than the significance level 5\%.  The values of $d_0$ are distributed between 2.05 and 9.13, and the ratio for $T_n$ tends to increase as either $p$ or $\gamma$ increases, indicating that size distortion becomes more severe in such situations. In contrast, each $T_n^\A$ achieves excellent sizes and $d_\A$'s are observed to be close to 1 in most cases considered.  This implies that $T_n^\A$ performs consistently whether outliers exist or not.  Comparing to the results for the empirical sizes, the empirical powers displayed in Tables \ref{tab6}-\ref{tab8} shows that the powers of $T_n$ and $T_n^\A$ are not so sensitive to outliers. Also, significant power losses are not observed. However, one can see that $d_\A$ of $T_n^\A$ are comparatively closer to 1 than that of $T_n$ in most cases, which indicates that $T_n^\A$ is less affected by outliers. Recalling that no size distortions are observed, we can see that the proposed test performs adequately regardless of outliers. In the cases where $d_\A$ is relatively higher than 1, it may be understood that outliers additionally raise the rejection ratios of the null hypothesis. For example, see the case that  $\theta$ changes from (2,0.1,0.2) to (2.5, 0.1,0.2) and $n$=300 in Table \ref{tab7}. $d_0$ for AO and IO contaminations are obtained to be 2.01 and 1.88, respectively.

Overall, our simulation results demonstrate the validity and strong robustness of the proposed test. Furthermore, our test is quite successful when the parameter lies near a boundary. Thus, our test can be a promising tool in testing for parameter change when outliers are suspected to exist.

%#############################################################################################################
%############################### Conclusion ####################################################################
%#############################################################################################################

\section{Conclusion}
In this study, we proposed a robust test for parameter change in Poisson AR models. To construct a test statistics, we used the density power divergence by Basu et al. (1998), and thus our test can be considered as extension of the score test induced from Kullback-Leibler divergence. Under the regularity conditions, we derived the null limiting distribution of the proposed test. The simulation study showed that our test produces excellent sizes and reasonably good powers regardless of the presence of outliers, while the existing score is compromised by outliers.  Therefore, our test can be a useful tool in testing for parameter change when outliers are suspected to contaminate data.

Our test procedure can be applied to other type of inter-valued time series models. For this, asymptotic properties of MDPDE for the models need to be established first. We leave the extension to other models as a task for our future study.\\

%#############################################################################################################
%############################### Appendix ####################################################################
%#############################################################################################################

\section{Appendix}
In this appendix, we provide the proofs for the Theorem \ref{Thm_Score} in Section 3.

%################################ Lemma1 ####################################################################
\begin{lemma}\label{Lm1} Suppose that the conditions in Theorem \ref{Thm_Score} hold. Then, under $H_0$, we have
\[\frac{1}{\sqrt{n}}\pa_{\theta}\tilde{H}_{\A,[ns]}(\T_0)-\frac{[ns]}{n}\frac{1}{\sqrt{n}}\,\pa_{\theta}\tilde{H}_{\A,n}(\theta_0)
\stackrel{w}{\longrightarrow}
 (1+\A) K_\A^{1/2}B^o_d(s)\quad \rm{in}\ \ \mathbb{D}\,\big( [0,1],\, \mathbb{R}^d\big).\]
\end{lemma}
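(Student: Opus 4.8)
The plan is to pass from the ``tilde'' quantities (which carry the influence of the arbitrary initial value $\tilde\lambda_1$) to the stationary ``non-tilde'' quantities, to establish a functional central limit theorem (FCLT) for the latter via a martingale argument, and then to extract the Brownian bridge by the continuous mapping theorem. Throughout, write $D_t:=\pa_\theta l_{\A,t}(\theta_0)$ and $\tilde D_t:=\pa_\theta\tilde l_{\A,t}(\theta_0)$, so that $\pa_\theta\tilde H_{\A,[ns]}(\theta_0)=\sum_{t=1}^{[ns]}\tilde D_t$.

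First I would reduce to the stationary version. The difference between the stated left-hand side and its non-tilde analogue is dominated, uniformly in $s\in[0,1]$, by $\frac{1}{\sqrt n}\sum_{t=1}^{n}\|\tilde D_t-D_t\|$. Since $D_t$ is a smooth function of $(\lambda_t,\pa_\theta\lambda_t)$ with $\lambda_t\ge\delta_L$ bounded away from $0$ by {\bf A2}, the gap $\|\tilde D_t-D_t\|$ is dominated by a random coefficient of finite moment (built from $X_t$ and $\pa_\theta\lambda_t$, controlled by {\bf A6} and the finiteness of all moments of $X_t$) times $|\tilde\lambda_t-\lambda_t|+\|\pa_\theta\tilde\lambda_t-\pa_\theta\lambda_t\|$. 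The contraction property {\bf A1} forces $|\tilde\lambda_t-\lambda_t|$ to decay geometrically, and {\bf A7} gives the companion bound $\|\pa_\theta\tilde\lambda_t-\pa_\theta\lambda_t\|\le V\rho^{t}$; hence $\frac{1}{\sqrt n}\sum_{t=1}^{n}\|\tilde D_t-D_t\|\to 0$ in probability, and the tilde and non-tilde processes share the same weak limit.

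Next I would identify $\{D_t\}$ as a martingale difference sequence and invoke the martingale FCLT. Direct differentiation gives
\[
D_t=(1+\A)\,\frac{\pa_\theta\lambda_t}{\lambda_t}\Big[\sum_{y=0}^{\infty}\big(p_y(\lambda_t)\big)^{1+\A}(y-\lambda_t)-\big(p_{X_t}(\lambda_t)\big)^{\A}(X_t-\lambda_t)\Big],
\]
with $p_y(\lambda)=e^{-\lambda}\lambda^{y}/y!$ and $\lambda_t=\lambda_t(\theta_0)$. Since $X_t\mid\mathcal F_{t-1}\sim Poisson(\lambda_t)$ while $\lambda_t,\pa_\theta\lambda_t$ are $\mathcal F_{t-1}$-measurable, taking $E[\,\cdot\mid\mathcal F_{t-1}]$ annihilates the bracket, so $E[D_t\mid\mathcal F_{t-1}]=0$; thus $\{D_t\}$ is a stationary, ergodic martingale difference sequence (stationarity and ergodicity from {\bf A1}), square-integrable with $E[D_tD_t^{T}]=(1+\A)^2K_\A$. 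The multivariate FCLT for stationary ergodic square-integrable martingale differences then yields
\[
\frac{1}{\sqrt n}\sum_{t=1}^{[ns]}D_t \stackrel{w}{\longrightarrow}(1+\A)K_\A^{1/2}W_d(s)\quad\text{in }\mathbb D\big([0,1],\mathbb R^{d}\big),
\]
where $W_d$ is standard $d$-dimensional Brownian motion, the conditional-covariance condition $\frac1n\sum_{t\le[ns]}E[D_tD_t^{T}\mid\mathcal F_{t-1}]\to s\,(1+\A)^2K_\A$ coming from the ergodic theorem.

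Finally I would apply the continuous mapping theorem to the map $x(\cdot)\mapsto x(\cdot)-(\cdot)\,x(1)$ on $\mathbb D([0,1],\mathbb R^{d})$ (legitimate since the limit $W_d$ has continuous paths and $[ns]/n\to s$ uniformly), which sends $W_d$ to the standard Brownian bridge $B_d^o(s)=W_d(s)-sW_d(1)$, producing the limit $(1+\A)K_\A^{1/2}B_d^o(s)$; combining with the reduction of the first paragraph through Slutsky's argument finishes the proof. I expect the main obstacle to be the rigorous verification of the martingale FCLT, in particular confirming that $\{E[D_tD_t^{T}\mid\mathcal F_{t-1}]\}$ is itself stationary and ergodic so that the ergodic theorem delivers the conditional-covariance limit, and checking the (conditional) Lindeberg condition; the untilding estimate, though routine, also requires making the smoothness of $D_t$ in $(\lambda_t,\pa_\theta\lambda_t)$ quantitative.
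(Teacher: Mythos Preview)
Your proposal is correct and follows the same two-step route as the paper: first establish the functional CLT $\frac{1}{\sqrt n}\pa_\theta\tilde H_{\A,[ns]}(\theta_0)\stackrel{w}{\longrightarrow}(1+\A)K_\A^{1/2}B_d(s)$, then apply the continuous mapping $x(\cdot)\mapsto x(\cdot)-(\cdot)x(1)$ to obtain the bridge. The only difference is that the paper simply cites this FCLT from Lemma~1 of Kang and Song (2015), whereas you unpack its proof---the untilding reduction via {\bf A1}/{\bf A7} and the stationary-ergodic martingale-difference argument for $\{D_t\}$---so your version is more self-contained but not methodologically distinct.
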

\begin{proof}
Note that
\begin{eqnarray*}
\frac{1}{\sqrt{n}} \pa_\T \tilde
H_{\alpha,[ns]}(\theta_0)\stackrel {w}{\longrightarrow} (1+\A) K_\A^{1/2}B_d(s)~~in~~\mathbb {D}([0,1],\mathbb {R}^d )
\end{eqnarray*}
where ${B}_d$ is a standard $d$-dimensional Brownian motion (
see Lemma 1 in Kang and Song (2015)), which consequently yield the lemma.
\end{proof}

%################################ Lemma2 ####################################################################

\begin{lemma}\label{Lm2} Suppose that the conditions in Theorem \ref{Thm_Score} hold. Then, under $H_0$, we have
\begin{eqnarray*}
\frac{1}{n}\sum_{t=1}^{n}\sup_{\theta \in\Theta}\left\|\paa  l_{\A,t}(\theta)
-\paa  \tilde l_{\A,t}(\theta)\right\|&=& o(1)\quad a.s.
\end{eqnarray*}
\end{lemma}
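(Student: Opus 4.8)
The plan is to exploit that $l_{\A,t}(\T)$ depends on $\T$ only through $\lambda_t(\T)$, so the problem reduces to propagating the exponentially fast approximation of $\lambda_t$ by $\tilde\lambda_t$ (and of its derivatives) through a smooth outer function. For $\A>0$ write $l_{\A,t}(\T)=\phi_\A\big(\lambda_t(\T),X_t\big)$, where
\[
\phi_\A(\lambda,x)=\sum_{y=0}^{\infty}\left(\frac{e^{-\lambda}\lambda^{y}}{y!}\right)^{1+\A}-\left(1+\frac1\A\right)\left(\frac{e^{-\lambda}\lambda^{x}}{x!}\right)^{\A},
\]
with the obvious analogue $\phi_0(\lambda,x)=\lambda-x\log\lambda+\log x!$ for $\A=0$; each $\phi_\A$ is infinitely differentiable in $\lambda$ on $(0,\infty)$. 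By the chain rule,
\[
\paa l_{\A,t}(\T)=\pa_\lambda^{2}\phi_\A(\lambda_t,X_t)\,\pa_\T\lambda_t(\pa_\T\lambda_t)^{T}+\pa_\lambda\phi_\A(\lambda_t,X_t)\,\paa\lambda_t,
\]
and the identical formula holds for $\paa\tilde l_{\A,t}$ with every $\lambda_t$ replaced by $\tilde\lambda_t$.

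Next I would subtract the two and split each product by $ab-\tilde a\tilde b=(a-\tilde a)b+\tilde a(b-\tilde b)$, so that $\paa l_{\A,t}-\paa\tilde l_{\A,t}$ becomes a finite sum of terms each carrying exactly one of the three increments $\lambda_t-\tilde\lambda_t$, $\pa_\T\lambda_t-\pa_\T\tilde\lambda_t$, $\paa\lambda_t-\paa\tilde\lambda_t$. The last two are bounded by $V\rho^{t}$ directly by {\bf A7}. For the first, since $\lambda_t$ and $\tilde\lambda_t$ are generated from the \emph{same} $X_{t-1}$, assumption {\bf A1} iterates to $\sup_{\T\in\Theta}|\lambda_t-\tilde\lambda_t|\le\kappa_1^{\,t-1}\sup_{\T}|\lambda_1(\T)-\tilde\lambda_1(\T)|\le C\rho^{t}$ a.s. (after enlarging $\rho<1$ so that $\kappa_1\le\rho$), with $C$ finite a.s. and integrable by {\bf A3}. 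The differences of the smooth coefficients, e.g. $\pa_\lambda^{2}\phi_\A(\lambda_t,X_t)-\pa_\lambda^{2}\phi_\A(\tilde\lambda_t,X_t)$, are handled by the mean value theorem, contributing a factor $|\lambda_t-\tilde\lambda_t|\le C\rho^t$ times a supremum of a third $\lambda$-derivative of $\phi_\A$.

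The main obstacle is to dominate the coefficient functions $\pa_\lambda\phi_\A$, $\pa_\lambda^{2}\phi_\A$, $\pa_\lambda^{3}\phi_\A$, evaluated at $(\lambda_t(\T),X_t)$, uniformly in $\T\in\Theta$ by integrable random variables. Here I would use three facts: the Poisson probabilities satisfy $(e^{-\lambda}\lambda^{x}/x!)^{\A}\le1$, so the contaminating term and its $\lambda$-derivatives always carry a bounded factor; the constraint $\lambda_t(\T)\ge\delta_L>0$ from {\bf A2} controls the negative powers of $\lambda$ produced by differentiation; and the remaining factors are fixed polynomials in $X_t$ (through expressions such as $X_t-\lambda$) and in $\lambda_t$, which are integrable of every order because $X_t$ and $\lambda_t$ have all moments finite under {\bf A1} (the series defining the first term of $\phi_\A$ and its derivatives are likewise dominated by Poisson moments and hence by polynomials in $\lambda_t$). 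Combining these with the moment bounds on $\pa_\T\lambda_t$ and $\paa\lambda_t$ in {\bf A6}, and bounding $\|\pa_\T\tilde\lambda_t\|\le\|\pa_\T\lambda_t\|+V\rho^{t}$, one obtains
\[
\sup_{\T\in\Theta}\left\|\paa l_{\A,t}(\T)-\paa\tilde l_{\A,t}(\T)\right\|\le H_t\,\rho^{t}\quad a.s.,
\]
where $H_t$ is a finite sum of products of at most $m$ factors, each factor (one of $V$, $C$, a polynomial in $X_t$, $\|\pa_\T\lambda_t\|$, $\|\paa\lambda_t\|$) being dominated by a nonnegative integrable stationary random variable.

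Finally I would conclude by a summability argument that avoids joint moment conditions on the product. For any nonnegative stationary $Y_t$ with $E(Y_1)<\infty$, $\sum_{t}P(Y_t>\ep t)=\sum_{t}P(Y_1>\ep t)\le\ep^{-1}E(Y_1)<\infty$, so Borel--Cantelli gives $Y_t=o(t)$ a.s.; applying this to each of the finitely many factors yields $H_t=o(t^{m})$ a.s., whence $\sum_{t\ge1}H_t\rho^{t}<\infty$ a.s. Thus the partial sums $S_n:=\sum_{t=1}^{n}H_t\rho^{t}$ converge to a finite limit, so $S_n/n\to0$ a.s., and therefore
\[
\frac1n\sum_{t=1}^{n}\sup_{\T\in\Theta}\left\|\paa l_{\A,t}(\T)-\paa\tilde l_{\A,t}(\T)\right\|\le\frac{S_n}{n}\longrightarrow0\quad a.s.,
\]
which is the claim. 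The case $\A=0$ is identical and simpler, as $\phi_0$ has elementary $\lambda$-derivatives $1-x/\lambda$, $x/\lambda^{2}$, $-2x/\lambda^{3}$.
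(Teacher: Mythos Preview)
Your argument is correct. The paper itself does not give an independent proof of this lemma; it simply records that the result is contained in the proof of Lemma~7 in Kang and Lee (2014b) and omits all details. Your route---writing $l_{\A,t}(\T)=\phi_\A(\lambda_t(\T),X_t)$, expanding the Hessian by the chain rule, controlling each coefficient $\pa_\lambda^{k}\phi_\A$ via $p(y;\lambda)^{\A}\le1$ together with the lower bound $\lambda_t\ge\delta_L$ from {\bf A2}, and then pairing the geometric factors coming from {\bf A1}/{\bf A7} with the Borel--Cantelli growth estimate $Y_t=o(t)$ for identically distributed integrable $Y_t$---is exactly the kind of argument one expects in the cited reference, and it is fully self-contained. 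What you gain over the paper is that no external source is needed; what the paper gains is brevity. One minor point worth flagging: your mean-value bound on $\pa_\lambda^{3}\phi_\A(\xi_t,X_t)$ tacitly uses $\xi_t\ge\delta_L$, which is guaranteed only for $t\ge2$ (since $\tilde\lambda_1$ is arbitrary); this is harmless because the single $t=1$ term is finite a.s.\ and is killed by the factor $1/n$.
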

\begin{proof}
The proof for Lemma 7 in Kang and Lee (2014b) include the stated results. Thus, we omit the proof.
\end{proof}

%################################ Lemma3 ####################################################################

\begin{lemma}\label{Lm3} Suppose that the conditions in Theorem \ref{Thm_Score} hold. Then, under $H_0$, we have that for any $\ep>0$, there exists a neighborhood $\mathcal{N}_\ep$ of $\T_0$ such that
\begin{eqnarray*}\label{K}
\E\sup_{\theta \in \mathcal{N}_\ep}\left\|\paa  l_{\A,t}(\T)-\paa  l_{\A,t}(\T_0)\right\|<\ep.
\end{eqnarray*}
\end{lemma}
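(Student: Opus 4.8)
The plan is to prove this mean-continuity statement at $\T_0$ by combining pointwise (almost sure) continuity of $\paa l_{\A,t}(\T)$ in $\T$ with an integrable envelope, and then invoking dominated convergence over a family of shrinking neighborhoods. Since $l_{\A,t}(\T)$ depends on $\T$ only through $\lambda_t(\T)$, I would first record the chain-rule expansion
\[
\paa l_{\A,t}(\T)=\frac{\pa^2 l_{\A,t}}{\pa\lambda^2}\,\pa_\T\lambda_t(\T)\,\pa_{\T^T}\lambda_t(\T)+\frac{\pa l_{\A,t}}{\pa\lambda}\,\paa\lambda_t(\T),
\]
which separates the $\T$-dependence into the scalar $\lambda$-derivatives of the loss and the first and second $\T$-derivatives of $\lambda_t$.

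For the almost sure continuity, I would argue as follows. Under {\bf A6}, $\lambda_t(\T)$, $\pa_\T\lambda_t(\T)$ and $\paa\lambda_t(\T)$ are continuous in $\T$ a.s. By {\bf A2}, $\lambda_t(\T)\geq\delta_L>0$ uniformly in $\T$, so the maps $\lambda\mapsto \pa l_{\A,t}/\pa\lambda$ and $\lambda\mapsto \pa^2 l_{\A,t}/\pa\lambda^2$ are continuous on $[\delta_L,\infty)$: both the series term $\sum_{y\ge0}(e^{-\lambda}\lambda^y/y!)^{1+\A}$ and the $X_t$-power term $(e^{-\lambda}\lambda^{X_t}/X_t!)^{\A}$ are smooth there (and for $\A=0$ the derivatives are simply $1-X_t/\lambda$ and $X_t/\lambda^2$). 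Composing, $\paa l_{\A,t}(\T)$ is continuous in $\T$ a.s., hence $\|\paa l_{\A,t}(\T)-\paa l_{\A,t}(\T_0)\|\to0$ a.s. as $\T\to\T_0$.

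Next I would construct the dominating function. Using the chain-rule decomposition and the Cauchy--Schwarz inequality, $\E\sup_{\T\in\Theta}\|\paa l_{\A,t}(\T)\|$ is bounded by a sum of terms of the form $\big(\E\sup_\Theta|\pa^2 l_{\A,t}/\pa\lambda^2|^2\big)^{1/2}\big(\E\sup_\Theta\|\pa_\T\lambda_t\|^4\big)^{1/2}$ and $\big(\E\sup_\Theta|\pa l_{\A,t}/\pa\lambda|^2\big)^{1/2}\big(\E\sup_\Theta\|\paa\lambda_t\|^2\big)^{1/2}$. The factors involving the $\T$-derivatives of $\lambda_t$ are finite by {\bf A6}. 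The scalar $\lambda$-derivatives of the loss are polynomial in $\lambda_t$ and $X_t$ with coefficients controlled by powers of $\delta_L^{-1}$ (from {\bf A2}) multiplied by bounded density-power factors; since {\bf A1} guarantees that all moments of $X_t$ and $\lambda_t$ are finite, these scalar derivatives have finite second moments. Consequently $G:=2\sup_{\T\in\Theta}\|\paa l_{\A,t}(\T)\|$ is integrable and dominates $\sup_{\T\in\mathcal N}\|\paa l_{\A,t}(\T)-\paa l_{\A,t}(\T_0)\|$ for any neighborhood $\mathcal N\subset\Theta$ of $\T_0$.

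Finally I would conclude by dominated convergence. Taking nested neighborhoods $\mathcal N_\ep\downarrow\{\T_0\}$ and setting $g_\ep:=\sup_{\T\in\mathcal N_\ep}\|\paa l_{\A,t}(\T)-\paa l_{\A,t}(\T_0)\|$, one has $0\le g_\ep\le G$ and $g_\ep\downarrow0$ a.s. by the continuity established above, so $\E g_\ep\to0$; choosing the neighborhood small enough yields $\E g_\ep<\ep$ (by stationarity the bound is identical for every $t$). The main obstacle is the envelope step, namely verifying that the two scalar $\lambda$-derivatives of the DP-divergence loss have finite second moments: this requires differentiating the infinite series and the $X_t$-weighted power term through in $\lambda$ and showing the resulting expressions grow at most polynomially in $X_t$ and $\lambda_t$, where the uniform lower bound $\delta_L$ from {\bf A2} and the finiteness of all moments of $(X_t,\lambda_t)$ from {\bf A1} are essential.
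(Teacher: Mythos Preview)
Your argument is correct and follows essentially the same route as the paper: establish an integrable envelope for $\sup_{\theta\in\Theta}\|\paa l_{\A,t}(\theta)\|$, use almost sure continuity of $\paa l_{\A,t}(\theta)$ in $\theta$, and apply dominated convergence over shrinking neighborhoods. The only difference is that the paper obtains the envelope in one line by citing Lemma~5 of Kang and Lee (2014b), whereas you work it out explicitly via the chain-rule decomposition, {\bf A2}, {\bf A6}, and the moment bounds from {\bf A1}.
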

\begin{proof}
 From Lemma 5 of Kang and Lee (2014b), we have
\[\E \sup_{\T \in \Theta} \big\|\paa l_{\A,t}(\T)-\paa l_{\A,t}(\T_0)\big\| <\infty.\]
Let $\mathcal{N}_{1/n}(\T_0)=\{\T\in \Theta|\,\|\T-\T_0\|\leq 1/n\}$. Since $\paa l_{\A,t}(\T)$ is continuous in $\T$, it follows from the bounded convergence theorem that
\[ \E \sup_{\T \in \mathcal{N}_{1/n}(\T_0)} \big\|\paa l_{\A,t}(\T)-\paa l_{\A,t}(\T_0)\big\|=o(1)\quad \text{as}\quad n\rightarrow \infty. \]
This establishes the lemma.
\end{proof}

%################################ Lemma4 ####################################################################
\begin{lemma}\label{Lm4}Suppose that the conditions in Theorem \ref{Thm_Score} hold. Then, under $H_0$, we have
\begin{eqnarray*}
\sup_{0\leq s \leq1}\Big\|\frac{1}{n}\paa \tilde{H}_{\A,[ns]}(\T^*_{\A,n,s})\frac{1}{1+\A}J_\A^{-1}\,\frac{1}{\sqrt{n}}\,\pa_{\theta}\tilde{H}_{\A,n}(\T_0)
+\frac{[ns]}{n}\frac{1}{\sqrt{n}}\,\pa_{\theta}\tilde{H}_{\A,n}(\T_0)\Big\|=o_P(1).
\end{eqnarray*}
\end{lemma}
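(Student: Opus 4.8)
The plan is to recognise that, once the bounded, $s$-free score vector is factored out, the whole expression is governed by a matrix that converges to $0$ uniformly in $s$. First I would factor $\frac{1}{\sqrt n}\pa_\T\tilde H_{\A,n}(\T_0)$ out on the right of both summands. Writing $\frac{[ns]}{n}I_d=\frac{[ns]}{n}(1+\A)J_\A\cdot\frac{1}{1+\A}J_\A^{-1}$, the quantity inside the norm becomes
\[
\Big[\tfrac1n\paa\tilde H_{\A,[ns]}(\T^*_{\A,n,s})+\tfrac{[ns]}{n}(1+\A)J_\A\Big]\tfrac1{1+\A}J_\A^{-1}\tfrac1{\sqrt n}\pa_\T\tilde H_{\A,n}(\T_0).
\]
By Lemma \ref{Lm1} (its weak limit at $s=1$) the factor $\frac1{\sqrt n}\pa_\T\tilde H_{\A,n}(\T_0)$ is $O_P(1)$ and is independent of $s$, while $\frac1{1+\A}J_\A^{-1}$ is a fixed matrix; hence it suffices to show that the bracketed matrix tends to $0$ uniformly in $s$. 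Recalling $(1+\A)J_\A=-\E\big(\paa l_{\A,t}(\T_0)\big)$, this is the content of the remaining steps.

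For the bracket I would use the telescoping decomposition
\begin{align*}
&\tfrac1n\sum_{t=1}^{[ns]}\big(\paa\tilde l_{\A,t}(\T^*_{\A,n,s})-\paa l_{\A,t}(\T^*_{\A,n,s})\big)
+\tfrac1n\sum_{t=1}^{[ns]}\big(\paa l_{\A,t}(\T^*_{\A,n,s})-\paa l_{\A,t}(\T_0)\big)\\
&\qquad{}+\Big(\tfrac1n\sum_{t=1}^{[ns]}\paa l_{\A,t}(\T_0)-\tfrac{[ns]}{n}\E\big(\paa l_{\A,t}(\T_0)\big)\Big).
\end{align*}
The first sum is handled uniformly in $s$ by Lemma \ref{Lm2}, since its supremum over $s$ is bounded by $\frac1n\sum_{t=1}^{n}\sup_{\T\in\Theta}\|\paa l_{\A,t}(\T)-\paa\tilde l_{\A,t}(\T)\|=o(1)$ a.s. For the second sum I would exploit that $\T^*_{\A,n,s}$ lies between $\T_0$ and $\HT$, and that $\HT\to\T_0$ a.s.\ by Theorem \ref{thm1}; crucially $\HT$ does not depend on $s$, so for any $\ep>0$ there is an a.s.\ event on which, for all $n$ large, $\T^*_{\A,n,s}\in\mathcal N_\ep$ simultaneously for every $s$. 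The supremum over $s$ is then at most $\frac1n\sum_{t=1}^{n}\sup_{\T\in\mathcal N_\ep}\|\paa l_{\A,t}(\T)-\paa l_{\A,t}(\T_0)\|$, which converges a.s.\ by the ergodic theorem to $\E\sup_{\T\in\mathcal N_\ep}\|\paa l_{\A,t}(\T)-\paa l_{\A,t}(\T_0)\|<\ep$ by Lemma \ref{Lm3}; letting $\ep\downarrow0$ makes this term $o(1)$ a.s.

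For the third sum I would apply a functional strong law of large numbers to the stationary ergodic sequence $\{\paa l_{\A,t}(\T_0)\}$, which is integrable under {\bf A6}; this gives $\sup_{0\le s\le1}\|\frac1n\sum_{t=1}^{[ns]}\paa l_{\A,t}(\T_0)-s\,\E(\paa l_{\A,t}(\T_0))\|\to0$ a.s., and combined with $\frac{[ns]}{n}\to s$ uniformly the third sum is $o(1)$ a.s. Combining the three estimates shows the bracketed matrix is $o(1)$ a.s.\ uniformly in $s$, and multiplication by the fixed matrix $\frac1{1+\A}J_\A^{-1}$ and the $O_P(1)$, $s$-free score gives the asserted $o_P(1)$. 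I expect the main obstacle to be the uniform-in-$s$ treatment of the random intermediate point $\T^*_{\A,n,s}$: the decisive point is that it is trapped between $\T_0$ and the $s$-independent $\HT$, so its almost-sure localisation into $\mathcal N_\ep$ is simultaneous over all $s$, which is precisely what allows Lemmas \ref{Lm2}, \ref{Lm3} and the functional ergodic theorem to be invoked uniformly.
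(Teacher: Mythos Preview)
Your proof is correct and follows essentially the same route as the paper: the same factorisation into an $O_P(1)$, $s$-free score times the matrix $\tfrac1n\paa\tilde H_{\A,[ns]}(\T^*_{\A,n,s})+\tfrac{[ns]}{n}(1+\A)J_\A$, and the same three-term telescoping, with Lemma~\ref{Lm2} for the first piece and Lemma~\ref{Lm3} plus the ergodic theorem for the second. The only minor difference is in the third piece: you invoke a functional strong law for the stationary ergodic sequence $\{\paa l_{\A,t}(\T_0)\}$, whereas the paper proves the equivalent statement $\max_{1\le k\le n}\tfrac{k}{n}\|\tfrac1k\paa H_{\A,k}(\T_0)+(1+\A)J_\A\|=o(1)$ directly by splitting the maximum at $k=\sqrt n$; both arguments are standard and amount to the same thing.
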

\begin{proof}
Without confusion, we shall denote $\theta^*_{n,k/n}$ by $\theta^*_{n,k}$ for $k \leq n$. Note that
\begin{eqnarray*}
&&\sup_{0\leq s \leq1}\Big\|\frac{1}{n}\paa \tilde{H}_{\A,[ns]}(\T^*_{\A,n,s})\frac{1}{1+\A}J_\A^{-1}\,\frac{1}{\sqrt{n}}\,\pa_{\theta}\tilde{H}_{\A,n}(\T_0)
+\frac{[ns]}{n}\frac{1}{\sqrt{n}}\,\pa_{\theta}\tilde{H}_{\A,n}(\T_0)\Big\|\\
&&\leq \Big\|\frac{1}{1+\A}J_\A^{-1} \frac{1}{\sqrt{n}}\,\pa_{\theta}\tilde{H}_{\A,n}(\T_0) \Big\|\,\max_{ 1\leq k\leq n} \frac{k}{n}\Big\|\frac{1}{k}\paa \tilde{H}_{\A,k}(\T^*_{\A,n,k})+(1+\A)J_\A\Big\|.
\end{eqnarray*}
Owing to Lemma \ref{Lm1}, it suffices to show that
\begin{eqnarray}\label{Lm4.1}
\max_{ 1\leq k\leq n} \frac{k}{n}\Big\|\frac{1}{k}\paa \tilde{H}_{\A,k}(\T^*_{\A,n,k})+(1+\A)J_\A\Big\|=o_P(1).
\end{eqnarray}
For any $\ep>0$, observe that we can take a neighborhood $\mathcal{N}_\ep(\T_0)=\{\theta\in\Theta|\, \|\theta-\theta_0\|< r_\ep\}$ such that
\begin{eqnarray}\label{Lm4.2}
\E\sup_{\theta \in \mathcal{N}_\ep(\T_0)}\| \paa l_{\A,t} (\theta)- \paa l_{\A,t} (\theta_0)\| < \ep,
\end{eqnarray}
due to Lemma \ref{Lm3}.
Since $\HT$ converges almost surely to $\theta_0$, we have that for sufficiently large $n$,
\begin{eqnarray*}
&&\max_{ 1\leq k\leq n} \frac{k}{n}\Big\|\frac{1}{k}\paa \tilde{H}_{\A,k}(\T^*_{\A,n,k})+(1+\A)J_\A\Big\|\\
&&\leq \max_{ 1\leq k\leq n} \frac{1}{n}\big\|\paa \tilde{H}_{\A,k}(\T^*_{\A,n,k})-\paa H_{\A,k}(\T^*_{\A,n,k})\big\|
+ \max_{ 1\leq k\leq n} \frac{1}{n}\big\|\paa H_{\A,k}(\T^*_{\A,n,k})-\paa H_{\A,k}(\T_0)\big\|\\
&&\hspace{0.5cm}
+ \max_{ 1\leq k\leq n} \frac{k}{n}\Big\|\frac{1}{k}\paa H_{\A,k}(\T_0)+(1+\A)J_\A\Big\|\\
&&\leq \frac{1}{n} \sum_{t=1}^n \sup_{\T\in\Theta} \| \paa\,\tilde{l}_{\A,t}(\T)-\paa\,l_{\A,t}(\T)\|
+\frac{1}{n} \sum_{t=1}^n \sup_{\theta\in \mathcal{N}_{\ep}(\theta_0)} \| \paa\,l_{\A,t}(\T)-\paa\,l_{\A,t}(\T_0)\|\\
&&\hspace{0.5cm}
+ \max_{ 1\leq k\leq n} \frac{k}{n}\Big\|\frac{1}{k}\paa H_{\A,k}(\T_0)+(1+\A)J_\A\Big\|\\
&&:=A_n +B_n+ C_n\quad a.s.
\end{eqnarray*}
First, one can see that  $A_n=o(1)$ a.s. by Lemma \ref{Lm2}. Also, using (\ref{Lm4.2}) and the fact that $\{\paa l_{\A,t} (\theta)\}$ is stationary and ergodic, we have
\begin{eqnarray*}
\lim_{n\rightarrow\infty}B_n =
\E\sup_{\theta \in \mathcal{N}_\ep(\theta_0)}\| \paa l_{\A,t} (\T)- \paa l_{\A,t} (\T_0)\| < \ep\quad a.s.
\end{eqnarray*}
Furthermore, noting that $\|\paa H_{\A,n}(\T_0)/n+(1+\A)J_\A\|$ converges to zero almost surely, it can be shown that
\begin{eqnarray*}
&&\max_{1\leq k \leq \sqrt{n}}  \frac{k}{n}\Big\|\frac{1}{k}\paa H_{\A,k}(\T_0)+(1+\A)J_\A\Big\|
\leq \frac{1}{\sqrt{n}} \sup_n \Big\|\frac{1}{n}\paa H_{\A,n}(\T_0)+(1+\A)J_\A\Big\|=o(1)\quad a.s.
\end{eqnarray*}
and
\begin{eqnarray*}
&&\max_{\sqrt{n} < k \leq n} \Big\|\frac{1}{k}\paa H_{\A,k}(\T_0)+(1+\A)J_\A\Big\| =o(1)\quad a.s.,
\end{eqnarray*}
which subsequently yield $C_n=o(1)$ a.s. and hence (\ref{Lm4.1}) is yielded. This completes the proof.
\end{proof}

%#################################  Lemma 5  ##########################################
\begin{lemma}\label{Lm5}Suppose that the conditions in Theorem \ref{Thm_Score} hold. Then, under $H_0$, we have
\begin{eqnarray*}
\sup_{0\leq s\leq1} \frac{1}{n} \big\|\paa \tilde{H}_{\A,[ns]}(\theta^*_{\A,n,s}) \big(B_{\A,n} +(1+\A)J_\A\big)\sqrt{n}(\HT-\T_0)\big\|=o_P(1).
\end{eqnarray*}
\end{lemma}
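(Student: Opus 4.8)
The plan is to factor the quantity inside the supremum into three pieces by submultiplicativity of the operator norm and then show the product is negligible because one factor vanishes while the other two remain bounded. Writing the Hessian block as a matrix and bounding,
\[
\frac{1}{n}\big\|\paa \tilde{H}_{\A,[ns]}(\T^*_{\A,n,s}) \big(B_{\A,n} +(1+\A)J_\A\big)\sqrt{n}(\HT-\T_0)\big\|
\leq \frac{1}{n}\big\|\paa \tilde{H}_{\A,[ns]}(\T^*_{\A,n,s})\big\|\,\big\|B_{\A,n} +(1+\A)J_\A\big\|\,\big\|\sqrt{n}(\HT-\T_0)\big\|,
\]
I would take the supremum over $s\in[0,1]$ of the first factor only, since the last two factors do not depend on $s$.

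First, by Theorem \ref{thm1} the term $\sqrt{n}(\HT-\T_0)$ converges in distribution and is therefore $O_P(1)$. Second, the middle factor is $o_P(1)$: recalling that $B_{\A,n}=\paa\tilde{H}_{\A,n}(\T^*_{\A,n,1})/n$ and that the $s=1$ intermediate point is exactly the $k=n$ point in the notation of the proof of Lemma \ref{Lm4}, the quantity $\|B_{\A,n}+(1+\A)J_\A\|$ is precisely the $k=n$ summand of the maximum appearing in (\ref{Lm4.1}). Since that whole maximum was shown to be $o_P(1)$ in the proof of Lemma \ref{Lm4}, the single term is $o_P(1)$ as well.

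The remaining work is to control the Hessian factor, and I would show it is in fact $O(1)$ a.s. Because $[ns]\leq n$ and each summand is dominated by its supremum over $\Theta$, I bound
\[
\sup_{0\leq s\leq 1}\frac{1}{n}\big\|\paa \tilde{H}_{\A,[ns]}(\T^*_{\A,n,s})\big\|
\leq \frac{1}{n}\sum_{t=1}^{n}\sup_{\T\in\Theta}\big\|\paa \tilde{l}_{\A,t}(\T)\big\|,
\]
whose right-hand side no longer depends on $s$ and dominates every intermediate point $\T^*_{\A,n,s}$. By Lemma \ref{Lm2} this differs from $\frac{1}{n}\sum_{t=1}^{n}\sup_{\T\in\Theta}\|\paa l_{\A,t}(\T)\|$ by $o(1)$ a.s., and the latter converges a.s. by the ergodic theorem to $\E\sup_{\T\in\Theta}\|\paa l_{\A,t}(\T)\|$, which is finite by the moment bound of Kang and Lee (2014b) already invoked in Lemma \ref{Lm3}. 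Combining the three estimates yields $O(1)\cdot o_P(1)\cdot O_P(1)=o_P(1)$, as claimed.

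The only place that looks like an obstacle is the uniform-in-$s$ control of the Hessian factor, but it dissolves at once: extending the sum from $[ns]$ to $n$ and passing to the supremum over $\Theta$ removes the $s$-dependence entirely, so no genuine tightness argument over $s$ is needed. The sole analytic input is the finiteness of $\E\sup_{\T\in\Theta}\|\paa l_{\A,t}(\T)\|$, which is already available.
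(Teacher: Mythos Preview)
Your proof is correct and follows the same three-factor decomposition as the paper, with identical handling of $\sqrt{n}(\HT-\T_0)=O_P(1)$ and $\|B_{\A,n}+(1+\A)J_\A\|=o_P(1)$ via (\ref{Lm4.1}). The only difference is in bounding the Hessian factor $\sup_s n^{-1}\|\paa\tilde{H}_{\A,[ns]}(\T^*_{\A,n,s})\|$: the paper adds and subtracts $(1+\A)J_\A$ and reuses (\ref{Lm4.1}) once more to get $O_P(1)$, whereas you bound it by $n^{-1}\sum_t \sup_{\Theta}\|\paa\tilde{l}_{\A,t}\|$ and appeal to Lemma~\ref{Lm2} plus the ergodic theorem---a slightly more self-contained but equally valid route.
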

\begin{proof}
Due to (\ref{Lm4.1}), we have
\[\|B_{\A,n} +(1+\A)J_\A\| \leq \max_{ 1\leq k\leq n} \frac{k}{n}\Big\| \frac{1}{k}\paa \tilde{H}_{\A,k}(\T^*_{\A,n,k})+(1+\A)J_\A\Big\|=o_P(1)\]
and
\[\sup_{0\leq s\leq1} \frac{1}{n} \|\paa \tilde{H}_{\A,[ns]}(\T^*_{\A,n,s}) \|
\leq\max_{ 1\leq k\leq n} \frac{1}{n} \|\paa \tilde{H}_{\A,k}(\T^*_{\A,n,k})+(1+\A)J_\A\|
+\|(1+\A)J_\A\|=O_P(1),\]
together with the fact that $\sqrt{n}(\HT-\T_0)=O_P(1)$,
the lemma is established.
\end{proof}

%\vspace{1cm}
%\noindent{\bf Acknowledgments}\\
%This work was supported by the research grant of Jeju National University in 2019.

%############################### References ###################################################################
\vspace{1cm}
\noindent{\bf References}
 \begin{description}
\item Aue, A. and Horváth, L. (2013). Structural breaks in time series. {\it Journal of Time Series Analysis} {\bf 34}, 1–16.
\item Basu, A., Harris, I. R., Hjort, N. L. and Jones, M. C. (1998). Robust and efficient estimation by minimizing a density power divergence. {\it Biometrika} {\bf 85}, 549-559.
%\item Basu, A., Shioya, H. and Park, C. (2011). {\it Statistical inference: the minimum distance approach.} Boca Raton: Chapman Hall/CRC.
\item Basu, A., Mandal, A., Martin, N. and Pardo, L. (2016). Generalized Wald-type tests based on minimum density power divergence estimators. {\it Statistics} {\bf 50(1)}, 1-26.
\item Batsidis, A., Horváth, L., Martín, N., Pardo, L. and Zografos, K. (2013). Change-point detection in multinomial data using phi-divergence test statistics. {\it Journal of Multivariate Analysis} {\bf 118}, 53-66.
%\item Berkes, I., Horv\'{a}th, L. and Kokoszka, P. (2004). Testing for parameter constancy in GARCH(p,q) models. {\it Statistics and Probability Letters} {\bf 4}, 263-273.
%\item Cui, Y. and Wu, R. (2019). Test of parameter changes in a class of observation-driven models for count time series. {\it Communications in Statistics-Theory and Methods} {\bf 1}, 1-27.
\item Diop, M. L. and Kengne, W. (2017) Testing parameter change in general integer‐valued time series. {\it Journal of Time Series Analysis} {\bf 38}, 880-894.
\item Doukhan, P., Fokianos, K. and Tj{\o}stheim, D. (2012). On weak dependence conditions for Poisson autoregressions. {\it Statistics and Probability Letters} {\bf 82}, 942-948.
\item Doukhan, P. and Kengne, W. (2015). Inference and testing for structural change in general Poisson autoregressive models. {\it Electronic Journal of Statistics} {\bf 9}, 1267-1314.
\item Ferland, R., Latour, A. and Oraichi, D. (2006). Integer-valued GARCH process. {\it Journal of Time Series Analysis } {\bf 27}, 923-942.
\item Fokianos, K. and Fried, R. (2010). Interventions in INGARCH processes. {\it Journal of Time Series Analysis } {\bf 31}, 210-225.
\item Fokianos, K., Rahbek, A. and Tj{\o}stheim, D. (2009). Poisson autoregression. {\it Journal of the American Statistical Association } {\bf 104}, 1430-1439.
\item Fried, R., Agueusop, I., Bornkamp, B., Fokianos, K., Fruth, J. and Ickstadt, K. (2013). Retrospective Bayesian outlier detection in INGARCH series. {\it Statistics and Computing} {\bf 25(2)}, 365-374.
%\item Ghosh, A., Mandal, A., Mart´ın, N. and Pardo, L. (2016). Influence analysis of robust Wald-type tests. {\it Journal of Multivariate Analysis} {\bf 147}, 102-126.
\item Horv\'{a}th, L. and Parzen, E. (1994). Limit theorems for Fisher-score change processes. {\it Lecture Notes-Monograph Series} {\bf 23}, 157-169.
\item Horváth, L. and Rice, G. (2014). Extensions of some classical methods in change point analysis. {\it TEST} {\bf 23}, 219-255.
\item Hudecov\'{a}, \v{S}., Hu\v{s}kov\'{a}, M. and Meintanis, S. G. (2017). Tests for structural changes in time series of counts. {\it Scandinavian Journal of Statistics} {\bf 44}, 843-865.
%\item Jones, M. C., Hjort, N. L., Harris, I. R. and Basu, A. (2001). A comparison of related density-based minimum divergence methods. {\it Biometrika} {\bf 88}, 865–873.
\item Jung, R. C. and Tremayne, A. R. (2011). Useful models for time series of counts or simply wrong ones? {\it AStA Advances in Statistical Analysis } {\bf 95}, 59-91.
\item Kang, J. and Lee, S. (2009). Parameter change test for random coefficient integer‐valued autoregressive processes with application to polio data analysis. {\it Journal of Time Series Analysis} {\bf 30(2)}, 239-258.
\item Kang, J. and Lee, S. (2014a). Parameter change test for Poisson autoregressive models. {\it Scandinavian Journal of Statistics} {\bf 41(4)}, 1136-1152.
\item Kang, J. and Lee, S. (2014b). Minimum density power divergence estimator for Poisson autoregressive models. {\it Computational Statistics and Data Analysis} {\bf 80}, 44-56.
\item Kang, J. and Song, J. (2015). Robust parameter change test for Poisson autoregressive models. {\it Statistics and Probability Letters} {\bf 104}, 14-21.
\item Kang, J. and Song, J. (2017). Score test for parameter change in Poisson autoregressive models. {\it Economics Letters} {\bf 160}, 33-37.
\item Knoblauch, J., Jewson, J. and Damoulas, T. (2018).Doubly robust Bayesian inference for non-stationary streaming data with $\beta$-divergences. {\it Advances in Neural Information Processing Systems 31}.
\item Lee, S. and Na, O. (2005). Test for parameter change based on the estimator minimizing density-based divergence measures. {\it Annals of the Institute of Statistical Mathematics} {\bf 57(3)}, 553-573.
%\item Negri, I. and Nishiyama, Y. (2017). Z-process method for change point problems with applications to discretely observed diffusion processes. {\it Statistical Methods and Applications} {\bf 26(2)}, 231-250.
\item Neumann, M. (2011). Absolute regularity and ergodicity of Poisson count processes. {\it Bernoulli} {\bf 17}, 1268-1284.
\item Pardo, L. (2006). {\it Statistical Inference Based on Divergence Measures}, Chapman and Hall/CRC.
\item Song, J. and Kang, J. (2019). Test for parameter change in the presence of outliers: the density power divergence based approach. arXiv:1907.00004.
%\item Szab\'{o}, T. T. (2011). Test statistics for parameter changes in INAR (p) models and a simulation study. {\it Austrian Journal of Statistics} {\bf 40(4)}, 265-280.
\item Weiß, C. H. (2009). {\it Categorical time series analysis and applications in statistical quality control.} dissertation. de-Verlag im Internet GmbH.
\item Weiß, C. H. (2010). The INARCH (1) model for overdispersed time series of counts. {\it Communications in Statistics-Simulation and Computation} {\bf 39(6)}, 1269-1291.
\item Zeger, S. L. (1988). A regression model for time series of counts. {\it Biometrika } {\bf 75}, 621-629.
\item Zhu, R. and Joe, H. (2006). Modelling count data time series with Markov processes based on binomial thinning. {\it Journal of Time Series Analysis} {\bf 27(5)}, 725-738.

%\bibitem[Straumann and Mikosch (2006)]{Straumann (2006)}
%Straumann, D. and Mikosch, T. (2006).
%\newblock Quasi maximum likelihood estimation in conditionally heteroscedastic time series: a stochastic recurrence equations approach.
%\newblock \emph{The Annals of Statistics,} {\bf 34,} \penalty0 \penalty0
%  2449--2495.
\end{description}

%############################### Tables ###################################################################
\newpage

\end{document}